\titlespacing\section{0pt}{*1}{*1}
\titlespacing\subsection{0pt}{*1}{*1}
\titlespacing\subsubsection{0pt}{*0.8}{*0.8}
\newtheorem{theorem}{Theorem}
\newtheorem{corollary}{Corollary}[theorem]
\newrobustcmd*{\myGls}{\@gls@hyp@opt\@myGls}
\newcommand*{\@myGls}[2][]{%
  \new@ifnextchar[{\@myGls@{#1}{#2}}{\@myGls@{#1}{#2}[]}%
}
\def\@myGls@#1#2[#3]{%
  \glsdoifexists{#2}%
  {%
    \let\do@gls@link@checkfirsthyper\@gls@link@checkfirsthyper
    \let\glsifplural\@secondoftwo
    \let\glscapscase\@secondofthree
    \def\glscustomtext{%
    \ifglsused{#2}
    {\acronymfont\glsentryshort{#2}#3}
    {%
    \ecapitalisewords{\glsentrylong{#2}}#3\space%
    \firstacronymfont(\glsentryshort{#2})}%
    }%
    \def\glsinsert{#3}%
    \def\@glo@text{\csname gls@\glstype @entryfmt\endcsname}%
    \@gls@link[#1]{#2}{\@glo@text}%
    \ifKV@glslink@local
      \glslocalunset{#2}%
    \else
      \glsunset{#2}%
    \fi
  }%
  \glspostlinkhook
}
\newrobustcmd*{\myGlspl}{\@gls@hyp@opt\@myGlspl}
\newcommand*{\@myGlspl}[2][]{%
  \new@ifnextchar[{\@myGlspl@{#1}{#2}}{\@myGlspl@{#1}{#2}[]}%
}
\def\@myGlspl@#1#2[#3]{%
  \glsdoifexists{#2}%
  {%
    \let\do@gls@link@checkfirsthyper\@gls@link@checkfirsthyper
    \let\glsifplural\@firstoftwo
    \let\glscapscase\@secondofthree
    \def\glscustomtext{%
    \ifglsused{#2}%
    {\acronymfont\glsentryshortpl{#2}#3}
    {%
    \ecapitalisewords{\glsentrylongpl{#2}}#3\space%
    \firstacronymfont(\glsentryshortpl{#2})}%
    }%
    \def\glsinsert{#3}%
    \def\@glo@text{\csname gls@\glstype @entryfmt\endcsname}%
    \@gls@link[#1]{#2}{\@glo@text}%
    \ifKV@glslink@local
      \glslocalunset{#2}%
    \else
      \glsunset{#2}%
    \fi
  }%
  \glspostlinkhook
}
\newacronym{psr}{PSR}{power split ratio}
\newacronym{isac}{ISAC}{integrated sensing and communications}
\newacronym{mimo}{MIMO}{multiple-input multiple-output}
\newacronym{iv}{IV}{instrumental variable}
\newacronym{nsp}{NSP}{null-space projection}
\newacronym{dof}{DoF}{degree of freedom}
\newacronym{sinr}{SINR}{signal-to-interference-plus-noise ratio}
\newacronym{sdma}{SDMA}{spatial division multiplexing}
\newacronym{csi}{CSI}{channel state information}
\newacronym{mui}{MUI}{multi-user interference}
\newacronym{rcg}{RCG}{Riemannian conjugate gradient}
\newacronym{crb}{CRB}{Cramer-Rao bound}
\newacronym{lmmse}{LMMSE}{linear minimum mean square error}
\newacronym{isl}{ISL}{integrated sidelobe level}
\newacronym{papr}{PAPR}{peak-to-average power ratio}
\newacronym{swipt}{SWIPT}{simultaneous information and power transfer}
\newacronym{qos}{QoS}{quality-of-service}
\newacronym{sar}{SAR}{synthetic aperture radar}
\newacronym{svd}{SVD}{singular value decomposition}
\newacronym{xr}{XR}{extended reality}
\newacronym{3gpp}{3GPP}{Third Generation Partnership Project}
\newacronym{nr}{NR}{New Radio}
\newacronym{prs}{PRS}{Positioning Reference Signal}
\newacronym{etsi}{ETSI}{European Telecommunication Standards Institution}
\newacronym{kkt}{KKT}{Karush-Kuhn-Tucker}
\newacronym{mm}{MM}{majorization-minimization}
\newacronym{mse}{MSE}{mean square error}
\newacronym{sdr}{SDR}{semi-definite relaxation}
\newacronym{itu}{ITU}{International Telecommunication Union}
\newacronym{sdisac}{SD-ISAC}{spatial-division ISAC}
\newacronym{ofdm}{OFDM}{orthogonal frequency division multiplexing}
\newcommand{\BHL}[1]{\textcolor{black}{#1}}
\begin{document}
\title{Spatial-Division ISAC: A Practical Waveform Design Strategy via Null-Space Superimposition}

\author{Byunghyun Lee, \IEEEmembership{Graduate Student Member, IEEE}, Hwanjin Kim, \IEEEmembership{Member, IEEE}, David J. Love, \IEEEmembership{Fellow, IEEE}, and James V. Krogmeier, \IEEEmembership{Senior Member, IEEE}
% \vspace{-20mm}
% \author{Byunghyun~Lee,~\IEEEmembership{Graduate~Student~Member,~IEEE,}
% , Hwanjin Kim, David J. Love, and James V. Krogmeier
\thanks{
A preliminary version of this work has been accepted for publication in the Asilomar Conference on Signals, Systems, and Computers, 2024  \cite{lee2024ISAC}.}
\thanks{
This work is supported in part by the National Science Foundation under grants EEC-1941529, CNS-2212565, and CNS-2225578 and the Office of Naval Research under grant N000142112472.
}
\thanks{Byunghyun Lee, David J. Love, and James V. Krogmeier are with the Elmore Family School of Electrical and Computer
Engineering, Purdue University, West Lafayette, IN 47907 USA (e-mails:
\{lee4093,djlove,jvk\}@purdue.edu).
}
\thanks{Hwanjin Kim is with the School of Electronics Engineering, Kyungpook National University, Daegu 41566, South Korea (e-mail:
hwanjin@knu.ac.kr).
}
}

\maketitle

\begin{abstract}

\Gls{isac} is a key enabler of new applications, such as precision agriculture, extended reality (XR), and digital twins, for 6G wireless systems.
However, the implementation of \gls{isac} technology is very challenging due to practical constraints such as high complexity.
% In this paper, we introduce \textit{a \gls{sdisac} waveform}, which simplifies the \gls{isac} waveform design problem by decoupling it into separate communication and radar waveform design tasks.  
In this paper, we introduce a novel \gls{isac} waveform design strategy, called \textit{the \gls{sdisac} waveform}, which simplifies the \gls{isac} waveform design problem by decoupling it into separate communication and radar waveform design tasks. 
Specifically, the proposed strategy leverages the null-space of the communication channel to superimpose sensing signals onto communication signals without interference.
This approach offers multiple benefits, including reduced complexity and the reuse of existing communication and radar waveforms.
We then address the problem of optimizing the spatial and temporal properties of the proposed waveform.
We develop a low-complexity beampattern matching algorithm, leveraging a \gls{mm} technique.
Furthermore, we develop a range sidelobe suppression algorithm based on manifold optimization.
We provide comprehensive discussions on the practical advantages and potential challenges of the proposed method, including null-space feedback. 
We evaluate the performance of the proposed waveform design algorithm through extensive simulations.
Simulation results show that the proposed method can provide similar or even superior performance to existing \gls{isac} algorithms while reducing computation time significantly.

\end{abstract}

\begin{IEEEkeywords}
Massive multiple-input multiple-output (MIMO), integrated sensing and communication (ISAC)
% joint radar and communication (JRC)
\end{IEEEkeywords}

\IEEEpeerreviewmaketitle
\glsresetall

% Over the past decades, wireless networks have become versatile and intelligent, driven by the advancement of signal processing and hardware technology.
% With higher spatial \gls{dof} offered by massive \gls{mimo} technology, wireless networks are now capable of a wide range of tasks, including sensing and communications.
% Moreover, the emergence of powerful processors has enabled the implementation of complex and sophisticated signal-processing algorithms.

% In particular, \gls{isac} is expected to play an important role in enabling emerging applications like \gls{xr}, precision agricu

\section{Introduction}

\Gls{isac} is a key technology of sixth generation (6G) wireless systems and will enable emerging applications like precision agriculture, \gls{xr} and digital twins \cite{brinton2024key6g,zhangTerahertzIntegratedSensing2025}.
The primary goal of \gls{isac} is to equip wireless systems with sensing capabilities (e.g., detection, tracking, imaging, mapping) through a co-design of communication and sensing on the shared infrastructure and spectrum.
 Recognizing its growing significance, the \gls{itu} included \gls{isac} as a key aspect of IMT-2030 \cite{ITU_IMT_2030_Framework_2023}.
\BHL{ 
Along the same line, \gls{3gpp} and \gls{etsi} have also initiated studies and standardization efforts on \gls{isac} \cite{3gpp.22.837,kaushik2024toward}.
}
% ,3gpp.22.137
% \subsection{Challenges to \Gls{isac} Waveform Design}

The most straightforward way to realize \gls{isac} is to use a dedicated signal and resource for sensing, such as the \gls{prs} in 5G \gls{nr} \cite{3gpp.38.211}.
% \cite{hanJointWirelessCommunication2013,bicaRadarWaveformOptimization2019a}.
However, this approach may not be optimal in terms of resource efficiency given the expected spectral congestion from the heavy traffic predicted in 6G networks. 
More recently, joint signaling of radar and communications on shared time and frequency resource has received a great deal of interest due to its potential to improve sensing-communication trade-offs \cite{liuJointTransmitBeamforming2020a,zhangIntegratedSensingCommunication2024,liuDualFunctionalRadarCommunicationWaveform2021,lee2024constant,liuRangeSidelobeReduction2020,liuCramerRaoBoundOptimization2022}.
% liuJointWaveformFilter2022
% wen2023transmit
% liuDualFunctionalRadarCommunicationWaveform2021
% liuDualFunctionalRadarCommunicationWaveform2021
In joint \gls{isac} signal design, it is critical to consider the fundamental difference between communications and sensing.
From a communications perspective, sensing signals are treated as interference to be eliminated.
By contrast, a sensing perspective seeks to reuse communication signals to maximize target illumination and enhance sensing quality.

% due to the need additional constraints, such as minimum rate, similarity, and peak-to-average power ratio, imposed by the dual-function operation. 

The design of a joint \gls{isac} signal is inherently more complex than the design of a communication-only or radar-only signal since the joint design must meet the requirements of both functions, such as minimum rate, sidelobe level, and peak-to-average power ratio \cite{zhouIntegratedSensingCommunication2022a}.
% , imposed by the dual-function operation.
These constraints complicate both the mathematical formulation and solution algorithms. 
Furthermore, in joint \gls{isac} signaling, the radar and communication signals are closely coupled.
This approach enforces simultaneous updates of radar and communication signals, potentially resulting in overly frequent signal updates and increased feedback overhead.
For example, while the communication precoder must be updated according to the channel coherence time, the radar waveform follows a different cycle determined by the coherent processing interval (CPI).
This tight coupling also limits the reuse of conventional precoding schemes from legacy communication systems, creating the need for system upgrades and standardization changes.
% posing significant challenges to the practical adoption of joint \gls{isac} signaling.

% \cite{zhangTerahertzIntegratedSensing2025}

In this paper, we introduce a novel ISAC waveform design strategy, referred to as \textit{the \gls{sdisac} waveform}, to address these challenges. 
The proposed method is inspired by the \gls{nsp} and \gls{iv} techniques.
The \gls{nsp} method has been utilized to manage interference in the radar-communication coexistence scenario \cite{sodagariProjectionBasedApproach2012a,mahalSpectralCoexistenceMIMO2017,khawarTargetDetectionPerformance2015}.
The \gls{nsp} method projects the radar signal onto the null space of the communication channel to avoid interference.
In \cite{liuMUMIMOCommunicationsMIMO2018a}, the authors extended the \gls{nsp} approach to a colocated radar-communication scenario where communication and radar transmit arrays are colocated but not shared.
% liuJointTransmitBeamforming2020a,zhangIntegratedSensingCommunication2024,liuDualFunctionalRadarCommunicationWaveform2021,lee2024constant,liuRangeSidelobeReduction2020,liuCramerRaoBoundOptimization2022
\begin{table*}[!t]
  \centering
  \renewcommand{\arraystretch}{1.4}
   \caption{Comparison of the proposed SD-ISAC method and existing schemes.}
  \begin{tabular}{lcccccc}
    \toprule
    Reference & Scenario & \makecell{Design\\ Approach} & \makecell{Null-Space\\ Projection} & \makecell{Comm.-Sensing\\Separability} & \makecell{Compatibility w.\\Existing Precoding} & \makecell{Radar Algorithm\\Reusability} \\ 
    \midrule \midrule
    \cite{liuJointTransmitBeamforming2020a} & \gls{isac} & \makecell{Beampattern Shaping}& - & - & - & - \\ 
    \cite{zhangIntegratedSensingCommunication2024} & \gls{isac} & \makecell{Joint Channel and\\ Target Parameter Estimation}& - & - & - & - \\ 
     \cite{liuDualFunctionalRadarCommunicationWaveform2021} & \gls{isac} & \makecell{Beampattern Shaping} & - & - & - & - \\ 
    \cite{liuRangeSidelobeReduction2020}   & \gls{isac} & \makecell{Correlation Suppression} & - & - & - & - \\ 
    \cite{liuCramerRaoBoundOptimization2022}   & \gls{isac} & \makecell{Cramer-Rao Bound\\ Minimization }& - & - & - & - \\ \midrule
    % \cite{liuJointWaveformFilter2022}   & \gls{isac} & \makecell{Space-time Adaptive \\Processing}& - & - & - & - \\ \midrule
    \cite{sodagariProjectionBasedApproach2012a} & \makecell{Radar-Comm. \\ Coexistence}  & \makecell{Interference Nulling} & \checkmark & \checkmark & \checkmark & \checkmark \\ 
\cite{mahalSpectralCoexistenceMIMO2017} & \makecell{Radar-Comm. \\ Coexistence}  &\makecell{Coordinated  Multi-point } & \checkmark & \checkmark & \checkmark & \checkmark \\ 
\cite{khawarTargetDetectionPerformance2015} & \makecell{Radar-Comm. \\ Coexistence}  &\makecell{Projection Matrix Selection } & \checkmark & \checkmark & \checkmark & \checkmark \\ 
    \cite{liuMUMIMOCommunicationsMIMO2018a}   & \makecell{Colocated \\ Radar-Comm.}  &\makecell{Beampattern Shaping}  & \checkmark & \checkmark & \checkmark & - \\ 
    \midrule \midrule
    % \textbf{SD-ISAC}
    \makecell{\textbf{SD-ISAC}\\ (Proposed)}
    % \makecell{\textbf{Proposed}\\\textbf{SD-ISAC}}
     & \gls{isac} & \makecell{Applicable to\\ any approach} & \checkmark & \checkmark & \checkmark & \checkmark \\
    \bottomrule
  \end{tabular}
 
  \label{tab:literature}
\end{table*}
% \cite{mccormickSimultaneousRadarCommunications2017a} utilized the null space of the communication and radar directions to form a constant modulus signal.
While the \gls{nsp} technique is a powerful tool in managing interference between radar and communications, it is not applicable to \gls{isac} systems where the transmit array is fully shared between radar and communications.
The \gls{iv} method has been widely adopted in time-series analysis, system identification, and array processing  \cite{mosesInstrumentalVariableAdaptive1988,Torsten1989System,stoicaInstrumentalVariableApproach1994}.
% gustafssonInstrumentalVariableSubspace1998
In \cite{liSignalSynthesisReceiver2008,huaReceiverDesignRange2013a}, an \gls{iv} filter was proposed for range and Doppler sidelobe reduction of \gls{mimo} radar signals as a replacement for the matched filter.
The core idea of the IV filter is to design \textit{an added variable} orthogonal to the transmit signal so that sidelobe suppression can be done without changing the mainlobe response.
Despite its potential, the use of the \gls{iv} method has been limited to receiver-side solutions.

% This paper proposes to leverage the \gls{iv} method for \gls{isac} waveform design.
Inspired by this, we propose a novel \gls{isac} waveform design strategy based on the \gls{nsp} and \gls{iv} methods.
The proposed waveform, referred to as \textit{the \gls{sdisac} waveform}, is structured with two components: a communication signal and an added sensing signal.
The basic rationale is that the sensing signal is constructed on the null space of the communication channel, to optimize radar metrics without interference.
By doing so, the proposed approach can turn the ISAC waveform design problem into separate communication-only and radar-only waveform design tasks, offering advantages such as reduced complexity.
The proposed method does not limit the communication signal to any specific precoding scheme, improving the compatibility with the existing precoders and mitigating the need to change legacy systems, including infrastructure and communication standards. 
We demonstrate optimizing the spatial and temporal properties of the proposed waveform.
Simulation results show that the proposed method outperforms the existing ISAC algorithm while offering additional benefits.

% Mention the subspace aspect?
% \cite{xiongTorchProjectorFundamental2023,xiongFundamentalTradeoffIntegrated2023,luPerformanceGainIntegrated2022}

% To tackle this problem, recent studies have optimized or characterized the expectation of sensing performance for random signals over channel and signal statistics using ergodic \gls{crb} \cite{luRandomISACSignals2024a}, ergodic \gls{lmmse}, and expected integrated sidelobe level \gls{isl} \cite{liuOFDMAchievesLowest2024a}.
% Despite these works, the actual system performance because.

\BHL{The comparison of our approach with existing approaches is provided in
Table \ref{tab:literature}, and the main contributions of this paper can be summarized as:}
\begin{itemize}
    \item 
    We introduce a novel \gls{isac} waveform structure, called \textit{the \gls{sdisac} waveform}.
    The proposed method superimposes sensing signals onto communication signals without interfering with the communication channel.
    The proposed method provides various advantages, such as low complexity and compatibility with standard precoding techniques.
    % and the ability to reuse existing algorithms, with little compromise on performance.
    These advantages are essential, as they help maximize the reuse of legacy systems, including infrastructure and communication standards.
    \item 
    We optimize the spatial beampattern of the proposed waveform.
    Thanks to the separability of the proposed method, the beampattern matching problem can be formulated as radar-only optimization without communication constraints, which admits a simple solution.
    In particular, we revisit the beampattern matching cost function introduced in \cite{fuhrmannTransmitBeamformingMIMO2004}, which provides superior beam shaping performance while simplifying the formulation.
    We leverage a \gls{mm} technique to approximate the objective with a quadratic function. 
    The approximated problem is then simplified into single-variable optimization using the \gls{kkt} conditions.
    
    % The decomposibility of the proposed method allows the beampattern matching problem formulation into 
    % Due to the decomposibility of the proposed waveform, the beampattern matching problem is formulated as a radar-only optimization without communication constraints.
    % We develop a low-complexity beampattern matching algorithm 
    
    \item 
    We present the problem of suppressing the range sidelobes of the proposed waveform.
    Similarly, range sidelobe suppression can be formulated as a radar-only problem, facilitating the reuse of existing algorithms.   
    Specifically, we minimize \gls{isl} of the waveform using a first-order Riemannian manifold optimization method.
    \item 
    We discuss the benefits and potential issues of the proposed waveform, including null-space feedback.    We show that the proposed waveform can be applied to other similar problems, such as \gls{swipt} waveform design and \gls{papr} minimization.
    
    % \item We provide a complexity analysis and further insights into the decomposability of the proposed method.
    \item We evaluate the performance of the proposed method through extensive simulations.
    Simulation results show the proposed strategy can perform comparably or outperform existing \gls{isac} methods without jointly optimizing communication and sensing. 
\end{itemize}

% The benefits of our proposed waveform design strategy over the previous \gls{isac} work can be summarized as follows.
% \begin{itemize}
%     \item 
% \end{itemize}

The remainder of the paper is organized as follows. 
In Sec. II, we set up the system model along with the communication and radar models and propose our \gls{isac} waveform design strategy.
Then, in Sec. III and Sec. IV, we focus on optimizing the spatial and temporal properties of the proposed waveform, respectively.
In Sec. V, we extend our proposed scheme to the imperfect \gls{csi} scenario.
Next, in Sec. VI, we discuss the advantages and potential challenges.
In Sec. VII, we evaluate our proposed algorithms.
Finally, we conclude the paper in Sec. VIII.

% \subsection{Main Contributions and Benefits of the Proposed Method}

% \subsubsection{Compatibility with Standard Communication Precoding Schemes}

% As mentioned in the previous sections, most \gls{isac} precoding or waveform design techniques, leading to nonlinear precoding.

% The proposed \gls{sdisac} waveform is compatible with any types of communication signals.
% This benefit is crucial given that most of the current cellular and WiFi networks adopt linear precoding for practical reasons such as cost and complexity.

\begin{figure*}[!t]
\center{\includegraphics[width=.87\linewidth]{./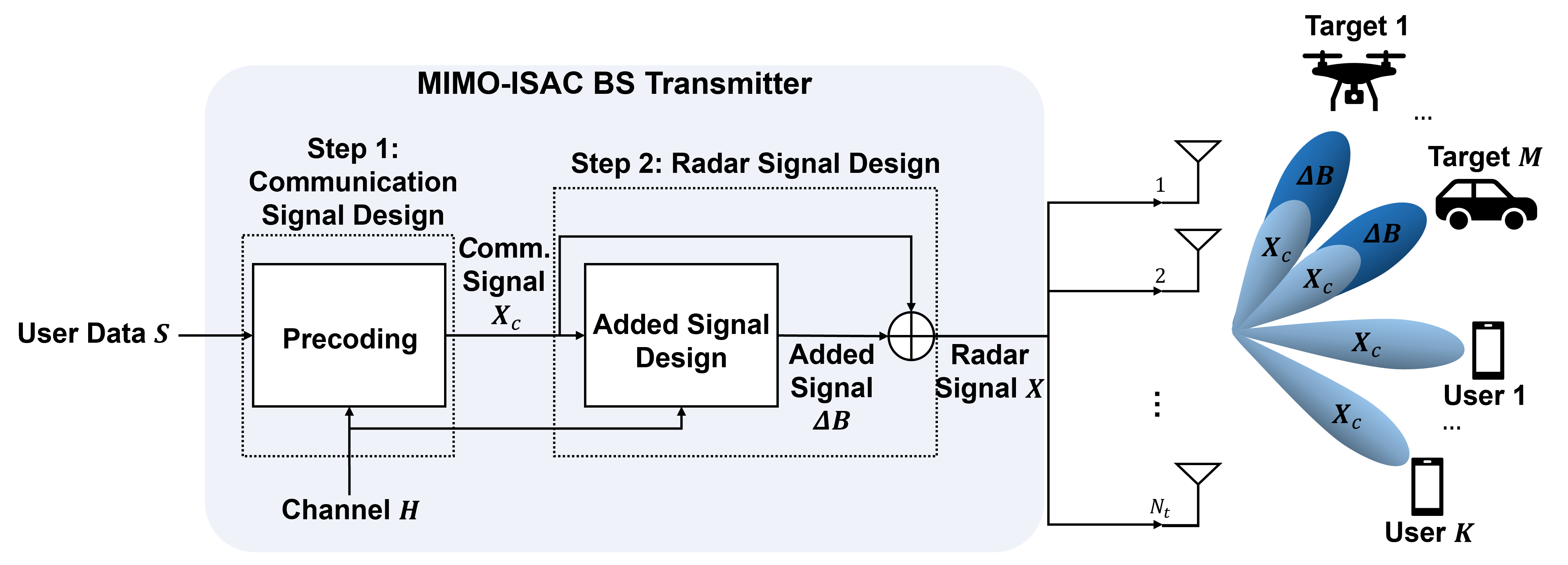}}
\caption{\small Proposed \gls{mimo}-\gls{isac} system.
The proposed waveform synthesis strategy consists of the following steps: (i) the communication signal is generated, and then (ii) the added signal $\bm{\Delta}\textbf{B}$ is constructed on the null-space of the communication channel and superimposed onto the communication signal $\textbf{X}_c$ without interference.
The resulting composite signal $\textbf{X}$ is used to illuminate the targets, while the communication component $\textbf{X}_c$ is responsible for conveying information bits.  
% The superimposed signal $\bm{\Delta}\textbf{B}$ is nulled at the communication users, i.e., causes no interference.
}
\label{fig:system}
\end{figure*}

\textbf{Notation}:
Vectors and matrices are denoted by boldface lowercase and uppercase letters, respectively.
$(\cdot)^T$, $(\cdot)^*$, $(\cdot)^H$, and $(\cdot)^{-1}$ are the transpose, conjugate, conjugate transpose, and inverse operators, respectively.
$|\cdot|$, $\Vert \cdot\Vert$, and $\Vert \cdot\Vert_F$ denote the absolute, 2-norm, and Frobenius norm operators, respectively.
% $\text{diag}(\cdot)$ is the diagonal matrix with diagonal entries, consisting of the input vector.
% $\text{vec}(\cdot)$ is the vectorization of a matrix.
% $\text{mat}(\cdot)$ is the inverse function of $\text{vec}(\cdot)$.
$\text{Tr}(\cdot)$ and $\text{Rank}(\cdot)$ are the trace and rank of a matrix, respectively.
$\text{Re}\{\cdot\}$ returns the real part of a complex number/vector/matrix.
$\mathbb{E}[\cdot]$ is the expectation operator.
$[\cdot]_{i,j}$ denotes the $(i,j)$th entry of the input matrix.
$\textbf{0}$ and $\textbf{I}$ represent the all-zero and identity matrices, respectively.
% $\angle$ is the phase of a complex number.
% $\otimes$ denotes the Kronecker product.
% $\nabla_{}$ denotes the gradient operation.
$\textbf{A}\succeq \textbf{B}$ indicates $\textbf{A}-\textbf{B}$ is a positive semidefinite matrix.
% $A_{i,j}$ denotes the $i,j$th entry of matrix $\textbf{A}$.
$\langle\cdot,\cdot\rangle$ denotes the Euclidean inner product.

\section{System Model and Proposed Waveform}

\subsection{System Model}
Consider a \gls{mimo}-\gls{isac} system in which a single base station (BS) functions as a communication transmitter and monostatic\footnote{
Despite the considered monostatic scenario, the proposed \gls{sdisac} waveform can also be applied to bistatic or multistatic scenarios, provided that the transmit array is shared by sensing and communications. 
% While this paper focuses on the monostatic scenario, the proposed \gls{sdisac} waveform can also be applied to bistatic or multistatic scenarios, provided that the transmit array is shared by sensing and communications.
} radar concurrently, as depicted in Fig. \ref{fig:system}.
% The BS is equipped with colocated transmit and receive arrays of ${N_t}$ and ${N_r}$ antennas, respectively.
The BS is equipped with a transmit array of ${N_t}$ antennas for joint radar and communication transmission.
A radar receive array of $N_r$ antennas is colocated with the transmit array at the BS to capture reflected signals.
% \BHL{
% The BS communicates with $K$ single-antenna users\footnote{
% \BHL{The proposed approach can be applied to the general point-to-point \gls{mimo} scenario as well.}
% } where $N_t > K$.
% }
The BS transmits $N$ bit streams to $K$ communication antennas where $K\geq N$.
% We assume the massive \gls{mimo} scenario where $N_t \gg K$.
In the considered \gls{isac} system, the BS transmits a downlink transmit signal $\textbf{X}\in\mathbb{C}^{{N_t}\times L}$ of length $L$ that simultaneously illuminates radar targets and conveys information bits for communications.

% \subsection{Communication Model}

% For communications, the BS transmits $N$ bit streams to $K$ receive antennas.
% We assume a narrowband block-fading communication channel that remains constant for $L$ symbol times.
We adopt a narrowband\footnote{\BHL{Although this paper considers the narrowband system as in many existing works in the literature \cite{liuJointTransmitBeamforming2020a,lee2024constant,liuCramerRaoBoundOptimization2022,liuRangeSidelobeReduction2020}, the proposed \gls{sdisac} can be extended to wideband systems. We provide discussions on this wideband extension in Section VI.}} block-fading channel model where the channel does not change for $L$ symbol times.
Given the transmit signal $\textbf{X}$, the input-output relationship for communications can be expressed as
\begin{equation}\label{eq:comm_signal}
    \textbf{Y}_c=\textbf{H}\textbf{X}+\textbf{W}_c,
\end{equation}
where 
$\textbf{H}=[\textbf{h}_1^H,\textbf{h}_2^H,\dots,\textbf{h}_K^H]^H\in\mathbb{C}^{K\times {N_t}}$ is the communication channel
with $\textbf{h}_k\in \mathbb{C}^{{N_t}\times 1}$ being the channel from the BS to user $k$
and $\textbf{W}_c\in\mathbb{C}^{K\times L}$ is the noise matrix with independent and identically distributed (i.i.d.) entries drawn from $\mathcal{CN}(0,\sigma_c^2)$. 
% $\textbf{W}_c=[\textbf{n}_1,\textbf{n}_2,\dots,\textbf{n}_N]^T\in\mathbb{C}^{{N_t}\times L}$ is the noise matrix with $\textbf{n}_n \sim \mathcal{CN}(\textbf{0},\sigma^2\textbf{I}_L)$ for $n=1,2,\dots,{N_t}$.
We assume that the BS has perfect knowledge of \textbf{H}.
% The $k$th row of the channel matrix corresponds to the channel of user $k$, i.e., $\textbf{H}=[\textbf{h}_1,\textbf{h}_2,\dots,\textbf{h}_K]^H$ with $\textbf{h}_k\in \mathbb{C}^{{N_t}\times 1}$ being the channel from the BS to user $k$.
Without loss of generality, for the rest of this paper, we focus on the $K$ single-antenna multi-user scenario where $N=K$ and the $k$th row of the channel matrix corresponds to the channel of user $k$, i.e., $\textbf{H}=[\textbf{h}_1,\textbf{h}_2,\dots,\textbf{h}_K]^H$ with $\textbf{h}_k\in \mathbb{C}^{{N_t}\times 1}$ being the channel from the BS to user $k$.

% \begin{remark}
%     The \gls{mimo} communication channel can be viewed as $K$ single-antenna user or $K$-antenna single-user scenarios.
% Without loss of generality, for the rest of this paper, we focus on the $K$ single-antenna multi-user scenario where the $k$th row of the channel matrix corresponds to the channel of each user, i.e., $\textbf{H}=[\textbf{h}_1,\textbf{h}_2,\dots,\textbf{h}_K]^H$ with $\textbf{h}_k\in \mathbb{C}^{{N_t}\times 1}$ being the channel from the BS to user $k$.
% \end{remark}

The transmitted signal is reflected at the target and clutter and received at the BS.
The received reflected signal at the radar receiver can be expressed as
\begin{equation}
    \begin{aligned}        
    \textbf{Y}_r
    % =\textbf{G}\textbf{X}+\textbf{T}\textbf{X}+\textbf{W}_r,
    % = \displaystyle\sum
    % =\textbf{G}\bar{\textbf{X}}+\textbf{T}\bar{\textbf{X}} + \textbf{W}_r,
    =\displaystyle\sum_{m=1}^M\alpha_m\textbf{a}_r(\theta_m)\textbf{a}_t^H(\theta_m)\bar{\textbf{X}}\textbf{J}_{\tau_m} +\textbf{W}_r
    \end{aligned}
\end{equation}
where $\bar{\textbf{X}}=[\textbf{X},\textbf{0}_{{N_t}\times L_0}]\in \mathbb{C}^{{N_t}\times (L+L_0)}$ is a zero-padded transmit waveform,
% accounting for the time shift $\tau_0$
% , $\textbf{G}\in\mathbb{C}^{{N_r}\times {N_t}}$ the target response matrix, 
$M$ is the number of targets, $\alpha_m\in\mathbb{C}$ is the radar cross-section of target $m$,
$\textbf{a}_t(\cdot)\in\mathbb{C}^{N_t}$ is the steering vector of the transmit array, 
$\textbf{a}_r(\cdot)\in\mathbb{C}^{N_r}$ is the steering vector of the radar receive array,
$\theta_m$ is the angle of target $m$, 
$\textbf{J}_{\tau_m}$ is the shift matrix, $\tau_m$ is the round-trip delay for target $m$,
% $\textbf{T}\in\mathbb{C}^{{N_r}\times {N_t}}$ is the clutter, 
and $\textbf{W}_r$ is the combined noise and clutter matrix.
 % with i.i.d entries drawn from $\mathcal{CN}(0,\sigma_r^2)$.
% $\textbf{W}_r$ is the noise matrix with i.i.d entries drawn from $\mathcal{CN}(0,\sigma_r^2)$.
The radar receiver stores $L+L_0$ received pulses to capture the delayed reflected signals. 
The shift matrix accounts for the round-trip delay of each target, and the $(i,j)$th entry of the shift matrix for delay $\tau$ is defined as \cite{horn2012matrix,he2009designing}
\begin{equation}\label{eq:shift_mat}
    [\textbf{J}_{\tau}]_{i,j} = 
    \begin{cases}
        1, & \text{if }j-i=\tau \\
        0, & \text{otherwise}
    \end{cases}
\end{equation}
and $\textbf{J}_{\tau}=\textbf{J}^T_{-\tau}$.
The target parameters such as angles and delays can be extracted from the received echo signal using receiver techniques like matched filtering.

\subsection{Proposed \gls{sdisac} Waveform}

In this section, we elaborate on the proposed waveform structure.
A typical \gls{isac} waveform design problem can be expressed as
\begin{mini}|s|
{\small\textbf{X}}{g_r(\textbf{X})}
{}{}
\addConstraint{g_{c}(\textbf{X})\leq 0}{}
\addConstraint{\Vert \textbf{X} \Vert_F^2 =LP_t}{}
\label{prob:DFRC}
\end{mini}
% or 
% \begin{mini}|s|
% {\textbf{X}}{g_c(\textbf{X})}
% {}{}
% \addConstraint{g_{r}(\textbf{X})\leq 0}{}
% \addConstraint{\Vert \textbf{X} \Vert_F^2 =P_t}{}
% \label{prob:DFRC}
% \end{mini}
where $g_{r}(\textbf{X})$ is the radar cost function, $g_{c}(\textbf{X})$ is the communication constraint, and $P_t$ is the transmit power budget.
As mentioned, a common approach is to directly design the transmit signal $\textbf{X}$ 
\cite{liuDualFunctionalRadarCommunicationWaveform2021,lee2024constant,liuRangeSidelobeReduction2020}.
% \cite{lee2024constant}.
This approach is essentially nonlinear precoding, which requires using knowledge of the realization of the data symbols.
% liuCramerRaoBoundOptimization2022
As an alternative approach, \cite{liuCramerRaoBoundOptimization2022,liuJointTransmitBeamforming2020a} design the sample covariance assuming that the sample and statistical covariance matrices are identical.
% , assuming infinite codeword lengths.
 % $\textbf{F}\textbf{F}^H$
 % $\textbf{S}\textbf{S}^H\approx \textbf{I}_N$
 % $\textbf{F}$
While this approximation holds true if the codeword length is infinite, this can introduce a non-negligible approximation error due to random data realization, especially for the massive MIMO scenario \cite{luRandomISACSignals2024a}.

% As shown above, the joint signaling approach requires the same update period for sensing and communications as their signals are jointly designed.
% % In this setup, the radar and communication functions must have the same waveform update period as they are mutually coupled as shown in \eqref{prob:DFRC}.
% % However, in many practical scenarios, the communication and radar functions have different operation cycles.
% % % For instance, communication precoders in the sub-6GHz frequency bands are typically updated every coherence time, which generally spans tens of milliseconds. 
% % % In contrast, the radar coherent processing interval (CPI) often extends to a longer duration, reaching up to several seconds.
% Moreover, the additional constraints bring higher complexity than traditional single-function waveform design (e.g., communication-only or sensing-only), making practical implementations difficult.

To tackle the aforementioned challenges, we propose a novel waveform structure that can decompose a dual-function waveform design problem into two single-function waveform design problems.
% The proposed waveform enables the combination of random communication\footnote{
% It is noteworthy that the proposed method allows the BS to use any precoding techniques from traditional linear precoding (e.g., zero-forcing, regularized zero-forcing) to nonlinear precoding (e.g., interference exploitation precoding \cite{liuDualFunctionalRadarCommunicationWaveform2021}) for generating communication signals. 
% } and deterministic sensing signals without such approximation. 
The proposed signal structure can be expressed as
\begin{equation}\label{eq:Tx_Waveform}
\begin{aligned}
% \textbf{X}&=\sqrt{P_c}\underbrace{\textbf{F}\textbf{S}}_{\textbf{X}_c}+(1-\sqrt{P_c})\bm{\Delta}\textbf{B} \\
% &=\underbrace{\sqrt{P_c}\textbf{X}_c}_{\text{comm. signal}}+\underbrace{\sqrt{P_a}\bm{\Delta}\textbf{B}}_{\text{additive signal}},
\textbf{X}=\underbrace{\textbf{F}\textbf{S}}_{\textbf{X}_c}+\bm{\Delta}\textbf{B}=\underbrace{\textbf{X}_c}_{\text{comm. signal}}+\underbrace{\bm{\Delta}\textbf{B}}_{\text{added signal}}
\end{aligned}
\end{equation}
where 
% $\textbf{F}\in \mathbb{C}^{{N_t}\times K}$ is the communication precoding matrix and $\textbf{S}\in \mathbb{C}^{K\times L}$ is the user data matrix with $\mathbb{E}[\textbf{S}\textbf{S}^H]=\textbf{I}_K$.
$\textbf{F}=[\textbf{f}_1,\textbf{f}_2,\dots,\textbf{f}_K]\in \mathbb{C}^{{N_t}\times K}$ is the communication precoding matrix with $\textbf{f}_k$ being the beamformer for user $k$ and $\textbf{S}=[\textbf{s}_1,\textbf{s}_2,\dots,\textbf{s}_K]^T\in \mathbb{C}^{K\times L}$ is the user data matrix with $\textbf{s}_k$ being the data for user $k$ and $\mathbb{E}[\textbf{S}\textbf{S}^H]=\textbf{I}_K$.
% $\mathbb{E}[\textbf{S}]=\textbf{0}_{K\times L}$ and 
$\bm{\Delta}\in\mathbb{C}^{{N_t}\times ({N_t}-K)}$ is a semi-unitary matrix whose columns span the null-space of the channel matrix $\textbf{H}$
 with $\bm{\Delta}^H\bm{\Delta} = \textbf{I}_{{N_t}-K}$
, and $\textbf{B}\in\mathbb{C}^{({N_t}-K)\times L}$ is an arbitrary matrix.
The projection matrix $\bm{\Delta}$ can be obtained via the \gls{svd} of $\textbf{H}$, which is given by
\begin{equation}
    \textbf{H} = \textbf{U}\bm{\Sigma}\textbf{V}^H.
\end{equation}
The projection matrix $\bm{\Delta}$ contains the ${N_t}-K$ rightmost column vectors of $\textbf{V}=[\textbf{v}_1,\textbf{v}_2,\dots,\textbf{v}_{{N_t}}]$ corresponding to the zero singular values, which is given by 
\begin{equation}
    \bm{\Delta}=[\textbf{v}_{K+1},\dots,\textbf{v}_{N_t}].
\end{equation}
It follows that $\textbf{H}\bm{\Delta}=\textbf{0}$. 

% \begin{figure}[!t]
% \center{\includegraphics[width=.65\linewidth]{./}}
% \caption{ \color{blue} Transmitter diagram.}
% \label{fig:TxDiagram}
% \end{figure}
\BHL{The added signal $\bm{\Delta}\textbf{B}$ contains discrete transmit symbols that are superimposed onto the communication transmit symbols to formulate the composite transmit symbols.
$[\textbf{X}]_{n,\ell}$ corresponds to the $\ell$th discrete transmit symbol at the $n$th transmit antenna, which can be converted to a continuous signal using a pulse-shaping filter.
Note that the added signal $\bm{\Delta}\textbf{B}$ does not introduce a change in the symbol rate or the bandwidth.
% , as it has the same timescale as $\textbf{X}_c$.
}

For clarity, we refer to the first and second terms in \eqref{eq:Tx_Waveform} as communication and additive signals, respectively.
By substituting \eqref{eq:Tx_Waveform} into \eqref{eq:comm_signal}, the communication input-output can be rewritten as 
\begin{align}
    \textbf{Y}_c&=\textbf{H}\textbf{X}+\textbf{W}_c=
    \textbf{H}(\textbf{X}_c+\bm{\Delta}\textbf{B})+\textbf{W}_c \\ \nonumber
    &=  \textbf{H}\textbf{X}_c+\textbf{W}_c.
\end{align}
It can be confirmed that the added signal always falls into the null space of the channel matrix, i.e., $\textbf{H}\bm{\Delta}=\textbf{0}_{K \times {N_t}-K}$, thereby  causing zero interference to the communication channel.

% This enables decomposing a dual-function waveform design problem into two single-function waveform design problems, which we will detail hereafter.
% This enables designing the matrix $\textbf{B}$ as the additive signal does not affect the communication channel.
% In consequence, the optimization problem and the corresponding solution can be simplified, as there are fewer constraints to be considered.

The main idea of the proposed method is that the communication signal $\textbf{X}_c$ delivers information bits, while the composite signal $\textbf{X}=\textbf{X}_c+\bm{\Delta}\textbf{B}$ illuminates targets.
This simple but powerful idea is based on the contrasting perspectives of communications and sensing.
Communications treats sensing signals as interference and, therefore, aims to eliminate them.
In contrast, sensing seeks to reuse the energy of communication signals to illuminate targets.
The proposed method satisfies the needs of the two functions through the superimposition on the null-space of the communication channel.

In the proposed framework, the transmit signal is constructed by designing matrices $\textbf{X}_c$ and $\textbf{B}$ individually, which significantly simplifies problem formulation. 
It is important to note $\textbf{X}_c$ can be formulated via traditional precoding techniques from linear precoding (e.g., zero-forcing, codebook-based beamforming) to nonlinear precoding. 
% Without loss of generality, this paper considers linear precoding for communications.
After formulating $\textbf{X}_c$, the added signal $\textbf{B}$ is constructed on top of $\textbf{X}_c$. In the following, we will focus on designing $\textbf{B}$ to optimize the waveform beam pattern for radar sensing.

The major difference between the \gls{sdisac} waveform and the \gls{iv} filter \cite{liSignalSynthesisReceiver2008,huaReceiverDesignRange2013a} is the power constraint.
The added signal uses the remaining transmit power after constructing the communication signal.
Denote the communication signal power and added signal power by $P_c$ and $P_a$, respectively.
The communication signal power can be written as $P_c=\mathbb{E}[\Vert \textbf{F}\textbf{S}\Vert_F^2]/L=\mathbb{E}[\text{Tr}(\textbf{F}\textbf{S}\textbf{S}^H\textbf{F}^H)]/L=\text{Tr}(\textbf{F}\mathbb{E}[\textbf{S}\textbf{S}^H]\textbf{F}^H)/L=\Vert \textbf{F}\Vert_F^2/L$ where the last equality follows from $\mathbb{E}[\textbf{S}\textbf{S}^H]=\textbf{I}_K$.
Moreover, given $\bm{\Delta}^H\bm{\Delta}=\textbf{I}_{{N_t}-K}$, we have $\Vert\bm{\Delta}\textbf{B}\Vert_F^2 = \text{Tr}\left(\textbf{B}^H\bm{\Delta}^H\bm{\Delta}\textbf{B}\right)= \Vert \textbf{B} \Vert _F ^2 = LP_a$.
% Given $\bm{\Delta}^H\bm{\Delta}=\textbf{I}_{{N_t}-K}$, we can simplify the power constraint as $\Vert\bm{\Delta}\textbf{B}\Vert_F^2 = \text{Tr}\left(\textbf{B}^H\bm{\Delta}^H\bm{\Delta}\textbf{B}\right)= \Vert \textbf{B} \Vert _F ^2 = P_a$.
The total transmit power can be expressed as $\mathbb{E}[\Vert\textbf{X}\Vert_F^2] = LP_t =L(P_c + P_a)$.
% We define the ratio of the communication signal power to the total power, i.e., $\beta=P_c/P_t$, as the power split ratio.
We define the ratio of the communication signal power to the total power, i.e., $\beta=P_c/P_t$, as the power split ratio.
The power split ratio controls the trade-off between communications and sensing.
A higher power split ratio value corresponds to the higher power of the communication signal, but less power available for the added signal.

\BHL{To determine the power split ratio, we first identify the communication power required to meet the minimum user \gls{qos}.
Without loss of generality, this paper considers power minimization under a per-user \gls{sinr} constraint for designing communication signals, which is formulated as 
\begin{equation}
    \begin{aligned}
        & \underset{\small\textbf{F}}{\min}
        & & \Vert \textbf{F} \Vert_F^2 \\
        & \text{s.t.}
        & & \frac{|\textbf{h}^H_k \textbf{f}_k|^2}{\sum_{k' \neq k}^K |\textbf{h}^H_k \textbf{f}_{k'}|^2 + L\sigma_c^2} \geq \gamma_k, \quad \forall k
    \end{aligned}
    \label{prob:comm}
\end{equation}
where $\gamma_k$ is the SINR threshold for user $k$.
By solving the problem \eqref{prob:comm}, we aim to assign the minimum power to communications satisfying the user \gls{sinr} requirements, maximizing the residual power for sensing.}

% \begin{remark}
% Although this paper focuses on linear precoding for communications, any precoding techniques from linear to nonlinear precoding can be used to generate communication signals, leading to high compatibility with legacy systems.
%     For example, the proposed method is compatible with codebook-based beamforming where beamforming matrix $\textbf{F}$ is chosen from a codebook $\mathcal{F}$ containing a finite set of elements \cite{dingMultipleAntennaBroadcast2007}.
% \end{remark}

To summarize, the proposed waveform design framework consists of the following steps.
\begin{enumerate}
    % \item Compute the communication signal $\textbf{X}_c$ using any precoding techniques.
    \item Formulate the beamforming matrix $\textbf{F}$ and communication signal $\textbf{X}_c=\textbf{F}\textbf{S}$.
    % using any precoding techniques.
    \item Compute the \gls{svd} of the communication channel and the projection matrix $\bm{\Delta}$.
    \item Given the communication signal, optimize the radar cost function over the added signal $\textbf{B}$. 
    % Then, formulate the transmit waveform by $\textbf{X}=\textbf{X}_c+\bm{\Delta}\textbf{B}$.
    Then, formulate the transmit waveform by $\textbf{X}=\textbf{F}\textbf{S}+\bm{\Delta}\textbf{B}$.
\end{enumerate}
The above process is summarized in Fig. \ref{fig:system}.
In the following sections, we will demonstrate synthesizing the matrix $\textbf{B}$ with the goal of optimizing the spatial and temporal properties of the proposed waveform.

\section{Application 1: Beampattern Optimization}

In this section, we address the design of the spatial beampattern for the proposed waveform.
In radar beampattern design, it is critical to maximize mainlobes for strong target reflections while minimizing sidelobes for clutter suppression.
\BHL{
There are several available metrics for beampattern design such as the beampattern \gls{mse} \cite{stoica2007probing,aittomakiLowComplexityMethodTransmit2007,ahmedUnconstrainedSynthesisCovariance2011,ahmedMIMORadarTransmit2014}, minimum peak sidelobe level ratio \cite{fanConstantModulusMIMO2018a}, minimum integrated sidelobe level ratio \cite{aubryMIMORadarBeampattern2016c} and \gls{crb} \cite{liRangeCompressionWaveform2008}. 
A beampattern matching approach aims to align the beampattern of the waveform with a reference beampattern with desired properties, such as a low sidelobe level \cite{li2008mimo}.
Although the proposed \gls{sdisac} waveform is compatible with any design, we focus on the beampattern matching approach due to its applicability to various scenarios such as multi-target scenarios.
}
% Since beampattern matching has been widely studied in the \gls{mimo} radar literature, there exist well-established solutions \cite{stoica2007probing,aittomakiLowComplexityMethodTransmit2007,ahmedUnconstrainedSynthesisCovariance2011,ahmedMIMORadarTransmit2014}.
% Compared to these algorithms, previous ISAC work  
We develop a low-complexity beampattern matching algorithm based on a majorization-minimization (MM) technique.
Specifically, we will find a surrogate function for the nonconvex objective function and minimize it via a simple line search.
% Specifically, we will majorize the nonconvex cost function with a convex 
% Specifically, we will majorize the cost function with a quadratic function whose minima can be found via a simple line search.

\subsection{Problem Formulation}

The beampattern at angle $\theta$ is given by \cite{li2008mimo}
\begin{equation}   \label{eq:beampattern}
\begin{aligned}    
G(\textbf{B},\theta)
&=\Vert \textbf{X}^H\textbf{a}_t(\theta) \Vert^2 \\
% &= \Vert \textbf{B}^H\underbrace{\bm{\Delta}^H\textbf{a}_t(\theta)}_{\bar{\textbf{a}}} + \underbrace{\textbf{X}_c^H\textbf{a}_t(\theta)}_{\textbf{r}}  \Vert^2 \\
&= \Vert \textbf{B}^H{\bm{\Delta}^H\textbf{a}_t(\theta)} + {\textbf{X}_c^H\textbf{a}_t(\theta)}  \Vert^2 \\
&= \Vert \textbf{B}^H\bar{\textbf{a}}+\textbf{r}\Vert^2
\end{aligned}
\end{equation}
where $\bar{\textbf{a}}=\bm{\Delta}^H\textbf{a}_t(\theta)$ and $\textbf{r}=\textbf{X}_c^H\textbf{a}_t(\theta)$.
% The beampattern matching cost is defined as the mean square error (MSE) between the reference beampattern and the actual beampattern, which can be formally written as 
Several cost functions have been proposed for beampattern matching 
\cite{stoica2007probing,aittomakiLowComplexityMethodTransmit2007,ahmedMIMORadarTransmit2014}, and the choice of cost function significantly impacts performance. 
% ahmedUnconstrainedSynthesisCovariance2011
In this paper, we focus on minimizing the \gls{mse} between the square roots of the designed and desired beampatterns \cite{fuhrmannTransmitBeamformingMIMO2004}. 
With this in mind, the beampattern matching cost is defined as \cite{fuhrmannTransmitBeamformingMIMO2004}
\begin{equation}\label{eq:beampattern_cost0}
\begin{aligned}
    g(\textbf{B})&=\displaystyle\sum_{u=1}^U\left|  \sqrt{G(\textbf{B},\theta_u)}-\sqrt{d(\theta_u)}\right|^2  \\
    &=\displaystyle\sum_{u=1}^U\left|  \left\Vert  \textbf{B}^H\bar{\textbf{a}}_u+\textbf{r}_u\right\Vert-\sqrt{d(\theta_u)}\right|^2
\end{aligned}
\end{equation}
where $U$ is the number of discretized angles, $u$ is the angle index,
% , $\alpha\geq 0$ is a scaling coefficient,
 and $d(\theta_u)\geq 0$ is the reference beampattern at angle $\theta_u$.
%   \cite{fuhrmannTransmitBeamformingMIMO2004} optimizes this cost function over the Cholesky factor of the waveform covariance.
% To this end, \cite{fuhrmannTransmitBeamformingMIMO2004} computes the gradient of the cost function and project it onto the tangent space of the manifold.
% Then, the Cholesky factor is updated by rotating its column vectors.
The cost function \eqref{eq:beampattern_cost0} can be rewritten as
\begin{equation}\label{eq:beampattern_cost}
\begin{aligned}
    g(\textbf{B})
    &=\left\Vert  \textbf{B}^H\textbf{A}+\textbf{Z}\right\Vert_F^2- 2\displaystyle\sum_{u=1}^U   \sqrt{d(\theta_u)}\left\Vert  \textbf{B}^H\bar{\textbf{a}}+{\textbf{r}}_u\right\Vert+\bar{d}
\end{aligned}   
\end{equation}
where 
\begin{equation*}
    \begin{aligned}
        \textbf{A} &= [\bar{\textbf{a}}_1,\bar{\textbf{a}}_2,\dots,\bar{\textbf{a}}_U], \
        \textbf{Z} = [\textbf{r}_1,\textbf{r}_2,\dots,\textbf{r}_U] \ \text{and} \
        \bar{d} = \displaystyle\sum_{u=1}^U d(\theta_u).
        % \textbf{A} &= [\bar{\textbf{a}}_1,\bar{\textbf{a}}_2,\dots,\bar{\textbf{a}}_U]\\
        % \textbf{Z} &= [\textbf{r}_1,\textbf{r}_2,\dots,\textbf{r}_U]\\
        % \bar{d} &= \displaystyle\sum_{u=1}^U d(\theta_u).
    \end{aligned}
\end{equation*}
% Note that we treat the communication signal $\textbf{X}_c$ as deterministic in the beampattern and power constraint expressions.

Given the formulated cost function and power constraint, we formalize a beampattern matching problem as
\begin{mini}|s| 
{\small\textbf{B}}{g(\textbf{B})} 
{}{} 
 \addConstraint{\Vert \textbf{B}\Vert_F^2=LP_a}{}.
 \label{prob:beampattern}
 % \label{prob:tx_power_const}
\end{mini}
% \begin{mini!}|s| 
% {\small\textbf{B}}{g(\textbf{B})} 
% {}{} \label{prob:beampattern}
% \addConstraint{\Vert \textbf{B}\Vert_F^2=LP_a}{} \label{prob:tx_power_const}
% \end{mini!}
% \BHL{where \eqref{prob:tx_power_const} is the power constraint\footnote{
% \gls{papr} 
% }.}
The proposed \gls{sdisac} method allows \eqref{prob:beampattern} to be free from communication constraints, which significantly simplifies the solution.
Yet, directly minimizing the objective is still challenging due to the combination of norm and squared norm in \eqref{eq:beampattern_cost}.
To address this, we majorize \eqref{eq:beampattern_cost} with a more tractable function that admits a simple solution.
% that admits a simple solution with line search.
% It can be seen that the above problem is nonconvex due to the sum.

\subsection{Solution Algorithm: MM-LineSearch}

We begin by majorizing the second term in \eqref{eq:beampattern_cost}.
\begin{theorem}\label{theorem:majorizer}
    The second term in \eqref{eq:beampattern_cost} is majorized by a linear function with respect to $\textbf{B}$ at point $\textbf{B}_t$ as 
    \begin{equation}\label{eq:beamapttern_norm}
- 2\displaystyle\sum_{u=1}^U   \sqrt{d(\theta_u)}\left\Vert  \textbf{B}^H\bar{\textbf{a}}_u+{\textbf{r}}_u\right\Vert \leq      -\text{Re}\left\{\text{Tr}\left(\textbf{B}^H\textbf{D}_t\right)\right\}+ const
    \end{equation}
    where 
    \begin{equation}
        \begin{aligned}
        \textbf{D}_t= 
        % 2\displaystyle\sum_{u=1}^U   \frac{\sqrt{d(\theta_u)}}{\left\Vert  \textbf{B}_t^H\bar{\textbf{a}}_u+{\textbf{r}}_u\right\Vert}\bar{\textbf{a}}_u(\bar{\textbf{a}}^H_u\textbf{B}_t+\textbf{r}^H_u)
        % \\
        \displaystyle\sum_{u=1}^U\frac{2\sqrt{d(\theta_u)}\bar{\textbf{a}}_u(\bar{\textbf{a}}_u^H \textbf{B}_t+\textbf{r}^H_u)}{\left\Vert\textbf{B}_t^H\bar{\textbf{a}}_u+{\textbf{r}}_u\right\Vert}
        \end{aligned}
    \end{equation}
  and the equality holds if $\textbf{B}=\textbf{B}_t$.
\end{theorem}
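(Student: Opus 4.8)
The plan is to recognize that each summand $-\Vert\textbf{B}^H\bar{\textbf{a}}_u+\textbf{r}_u\Vert$ is a negated norm and hence concave in $\textbf{B}$, so it is majorized by its supporting hyperplane at $\textbf{B}_t$. Rather than invoking concavity abstractly, I would derive the bound directly from the Cauchy--Schwarz inequality, which produces the tight linear majorizer in one step and makes the equality condition transparent.

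First I would introduce the shorthand $\textbf{c}_u=\textbf{B}^H\bar{\textbf{a}}_u+\textbf{r}_u$ and $\textbf{c}_{u,t}=\textbf{B}_t^H\bar{\textbf{a}}_u+\textbf{r}_u$, so that the term of interest is $-2\sum_{u=1}^U\sqrt{d(\theta_u)}\,\Vert\textbf{c}_u\Vert$. Applying Cauchy--Schwarz in the form $\Vert\textbf{c}_u\Vert\geq \text{Re}\{\textbf{c}_{u,t}^H\textbf{c}_u\}/\Vert\textbf{c}_{u,t}\Vert$, which holds since $\text{Re}\{\textbf{c}_{u,t}^H\textbf{c}_u\}\leq|\textbf{c}_{u,t}^H\textbf{c}_u|\leq\Vert\textbf{c}_{u,t}\Vert\,\Vert\textbf{c}_u\Vert$, and negating, gives the per-angle linear upper bound $-\Vert\textbf{c}_u\Vert\leq-\text{Re}\{\textbf{c}_{u,t}^H\textbf{c}_u\}/\Vert\textbf{c}_{u,t}\Vert$. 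Summing against the nonnegative weights $2\sqrt{d(\theta_u)}$ preserves the inequality.

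Next I would convert the right-hand side into the claimed trace form. Expanding $\textbf{c}_{u,t}^H\textbf{c}_u=(\bar{\textbf{a}}_u^H\textbf{B}_t+\textbf{r}_u^H)(\textbf{B}^H\bar{\textbf{a}}_u+\textbf{r}_u)$ splits it into a $\textbf{B}$-dependent piece $(\bar{\textbf{a}}_u^H\textbf{B}_t+\textbf{r}_u^H)\textbf{B}^H\bar{\textbf{a}}_u$ and a $\textbf{B}$-independent piece $(\bar{\textbf{a}}_u^H\textbf{B}_t+\textbf{r}_u^H)\textbf{r}_u$, the latter absorbed into $const$. Writing the scalar as a trace and using cyclic invariance, $(\bar{\textbf{a}}_u^H\textbf{B}_t+\textbf{r}_u^H)\textbf{B}^H\bar{\textbf{a}}_u=\text{Tr}(\textbf{B}^H\bar{\textbf{a}}_u(\bar{\textbf{a}}_u^H\textbf{B}_t+\textbf{r}_u^H))$, and then pulling the real part and the summation through the trace collects all $\textbf{B}$-dependent terms into $-\text{Re}\{\text{Tr}(\textbf{B}^H\textbf{D}_t)\}$ with $\textbf{D}_t$ exactly as stated in the theorem.

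Finally, for tightness I would evaluate at $\textbf{B}=\textbf{B}_t$, where $\textbf{c}_u=\textbf{c}_{u,t}$ and hence $\text{Re}\{\textbf{c}_{u,t}^H\textbf{c}_u\}/\Vert\textbf{c}_{u,t}\Vert=\Vert\textbf{c}_{u,t}\Vert^2/\Vert\textbf{c}_{u,t}\Vert=\Vert\textbf{c}_{u,t}\Vert$, turning each Cauchy--Schwarz step into an equality, so the majorizer touches the original function at $\textbf{B}_t$. The main obstacle I anticipate is purely bookkeeping: tracking which products are genuinely constant in $\textbf{B}$, and respecting the real-part and conjugate-transpose conventions when passing between the scalar inner product $\textbf{c}_{u,t}^H\textbf{c}_u$ and the trace $\text{Tr}(\textbf{B}^H\textbf{D}_t)$, so that the rank-one factor $\bar{\textbf{a}}_u(\bar{\textbf{a}}_u^H\textbf{B}_t+\textbf{r}_u^H)$ appears in the correct order inside the trace. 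A minor subtlety worth flagging is the degenerate case $\Vert\textbf{c}_{u,t}\Vert=0$, where the corresponding term should simply be dropped.
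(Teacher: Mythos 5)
Your proposal is correct and follows essentially the same route as the paper's proof: the Cauchy--Schwarz lower bound on each norm at $\textbf{B}_t$, negation and summation against the nonnegative weights $2\sqrt{d(\theta_u)}$, conversion to the trace form via cyclic invariance, and verification of equality at $\textbf{B}=\textbf{B}_t$. Your remark about the degenerate case $\Vert\textbf{B}_t^H\bar{\textbf{a}}_u+\textbf{r}_u\Vert=0$ is a reasonable extra precaution that the paper does not address.
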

\begin{proof}
By the Cauchy-Schwarz inequality, it follows that \cite{sun2016majorization}
% songSequenceSetDesign2016
\begin{equation}\label{ineq:Cauchy}
    \left\Vert\textbf{B}^H\bar{\textbf{a}}_u+{\textbf{r}}_u\right\Vert \geq
    \frac{\text{Re}\left\{(\bar{\textbf{a}}_u^H \textbf{B}_t+\textbf{r}^H_u)(\textbf{B}^H\bar{\textbf{a}}_u+\textbf{r}_u)\right\}}{\left\Vert\textbf{B}_t^H\bar{\textbf{a}}_u+{\textbf{r}}_u\right\Vert}.
\end{equation}
Rearranging the left-hand side of \eqref{eq:beamapttern_norm} using \eqref{ineq:Cauchy} yields
\begin{equation}
\begin{aligned}
    - 2&\displaystyle\sum_{u=1}^U   \sqrt{d(\theta_u)}\left\Vert  \textbf{B}^H\bar{\textbf{a}}_u+{\textbf{r}}_u\right\Vert \\ &\leq 
   - 2\displaystyle\sum_{u=1}^U   
   \frac{\sqrt{d(\theta_u)}\text{Re}\left\{(\bar{\textbf{a}}_u^H \textbf{B}_t+\textbf{r}_u^H)(\textbf{B}^H\bar{\textbf{a}}_u+\textbf{r}_u)\right\}}{\left\Vert\textbf{B}_t^H\bar{\textbf{a}}_u+{\textbf{r}}_u\right\Vert} \\
   &=
   - 2\displaystyle\sum_{u=1}^U   
   \frac{\sqrt{d(\theta_u)}\text{Re}\left\{\text{Tr}\left((\textbf{B}^H\bar{\textbf{a}}_u+\textbf{r}_u)(\bar{\textbf{a}}_u^H \textbf{B}_t+\textbf{r}^H_u)\right)\right\}}{\left\Vert\textbf{B}_t^H\bar{\textbf{a}}_u+{\textbf{r}}_u\right\Vert}\\
   &=
   - 2\displaystyle\sum_{u=1}^U   
   \frac{\sqrt{d(\theta_u)}\text{Re}\left\{\text{Tr}\left(\textbf{B}^H\bar{\textbf{a}}_u(\bar{\textbf{a}}_u^H \textbf{B}_t+\textbf{r}^H_u)\right)\right\}}{\left\Vert\textbf{B}_t^H\bar{\textbf{a}}_u+{\textbf{r}}_u\right\Vert} +const\\
   &=
   -    
   \text{Re}\left\{\text{Tr}\left(\textbf{B}^H\displaystyle\sum_{u=1}^U\frac{2\sqrt{d(\theta_u)}\bar{\textbf{a}}_u(\bar{\textbf{a}}_u^H \textbf{B}_t+\textbf{r}^H_u)}{\left\Vert\textbf{B}_t^H\bar{\textbf{a}}_u+{\textbf{r}}_u\right\Vert}\right)\right\} +const\\
   &= -\text{Re}\left\{\text{Tr}\left(\textbf{B}^H\textbf{D}_t\right)\right\}+const
\end{aligned}
\end{equation}
where the first equality comes from the cyclic property of the trace operator and the constant term collects all terms irrelevant to $\textbf{B}$.
If $\textbf{B}=\textbf{B}_t$, then $\text{Re}\left\{(\bar{\textbf{a}}_u^H \textbf{B}_t+\textbf{r}_u^H)(\textbf{B}_t^H\bar{\textbf{a}}_u+\textbf{r}_u)\right\}=\left\Vert\textbf{B}_t^H\bar{\textbf{a}}_u+{\textbf{r}}_u\right\Vert^2$.
Thus, the equality holds if $\textbf{B}=\textbf{B}_t$,
which completes the proof.

\end{proof}
A majorizer of \eqref{eq:beampattern_cost} can be easily found using Theorem \ref{theorem:majorizer}.
\begin{corollary}
    The cost function \eqref{eq:beampattern_cost} can be majorized as
    \begin{equation}
      g(\textbf{B})  \leq \left\Vert  \textbf{B}^H\textbf{A}+\textbf{Z}\right\Vert_F^2 -  \text{Re}\left\{\text{Tr}\left(\textbf{B}^H\textbf{D}_t\right)\right\} + const=\tilde{g}(\textbf{B},\textbf{B}_t)
    \end{equation}
    where the equality holds if $\textbf{B}=\textbf{B}_t$.
\end{corollary}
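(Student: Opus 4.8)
The plan is to treat this corollary as a direct assembly of Theorem \ref{theorem:majorizer}, exploiting the elementary fact that majorization is preserved under addition. First I would split the cost \eqref{eq:beampattern_cost} into its three constituents: the quadratic Frobenius term $\left\Vert \textbf{B}^H\textbf{A}+\textbf{Z}\right\Vert_F^2$, the negative weighted sum of norms $-2\sum_{u=1}^U\sqrt{d(\theta_u)}\left\Vert\textbf{B}^H\bar{\textbf{a}}_u+\textbf{r}_u\right\Vert$, and the scalar constant $\bar{d}$. The middle term is exactly the left-hand side of \eqref{eq:beamapttern_norm}, so Theorem \ref{theorem:majorizer} already supplies a linear majorizer $-\text{Re}\{\text{Tr}(\textbf{B}^H\textbf{D}_t)\}+const$ for it at the anchor point $\textbf{B}_t$.

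Next I would observe that the remaining two terms require no further work. The quadratic term majorizes itself, since any function satisfies the majorization inequality against itself with equality for every $\textbf{B}$, and the scalar $\bar{d}$ is a constant that can be folded into $const$. Summing the three pieces — the quadratic term left untouched, the linear majorizer inherited from Theorem \ref{theorem:majorizer}, and the absorbed constants — yields the claimed surrogate $\tilde{g}(\textbf{B},\textbf{B}_t)$, because a sum of upper bounds is an upper bound on the corresponding sum. This establishes the inequality $g(\textbf{B})\leq\tilde{g}(\textbf{B},\textbf{B}_t)$ for all $\textbf{B}$.

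For the tightness condition I would argue termwise: the quadratic term and the constant $\bar{d}$ coincide on both sides for every $\textbf{B}$, while Theorem \ref{theorem:majorizer} guarantees that its linear majorizer meets the norm term with equality precisely when $\textbf{B}=\textbf{B}_t$. Consequently $g(\textbf{B}_t)=\tilde{g}(\textbf{B}_t,\textbf{B}_t)$, and both defining properties of a majorizer are verified.

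I expect no genuine obstacle here, since all of the analytic effort — the Cauchy--Schwarz bound and the trace manipulations — has already been discharged in the proof of Theorem \ref{theorem:majorizer}. The only point worth stating explicitly is the principle that the majorization property is closed under addition and that tightness at the common anchor point is inherited by the sum; everything else amounts to bookkeeping of the constant terms.
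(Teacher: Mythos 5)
Your proposal is correct and matches the paper's (implicit) argument: the corollary follows by keeping the quadratic term, replacing the norm-sum term with the linear majorizer from Theorem~\ref{theorem:majorizer}, and absorbing $\bar{d}$ into the constant, with tightness at $\textbf{B}=\textbf{B}_t$ inherited termwise. The paper treats this as immediate from Theorem~\ref{theorem:majorizer} and gives no further detail, so there is nothing to add.
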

% It follows from Theorem \ref{theorem:majorizer}.

The obtained majorizer $\tilde{g}(\textbf{B},\textbf{B}_t)$ is a quadratic function.
Ignoring the constant in $\tilde{g}(\textbf{B},\textbf{B}_t)$, the beampattern matching problem can be recast as
\begin{mini}|s|
{\small\textbf{B}}{\left\Vert  \textbf{B}^H\textbf{A}+\textbf{Z}\right\Vert_F^2 -  \text{Re}\left\{\text{Tr}\left(\textbf{B}^H\textbf{D}_t\right)\right\}}
{}{}
\addConstraint{\left\Vert\textbf{B} \right\Vert_F^2 =LP_a}{}.
\label{prob:beampattern_approx}
\end{mini}
The approximated problem \eqref{prob:beampattern_approx} can be seen as a well-known trust-region subproblem for which the strong duality holds \cite{fortin2004trust}.
It is shown that this class of problems can be solved via a simple 1-dimensional search \cite{liuDualfunctionalRadarCommunicationSystems2018}.
% \begin{equation}
%     \mathcal{L}(\textbf{B},\lambda) = 
%     + \lambda(\Vert \textbf{B} \Vert_F^2 - P_a)
% \end{equation}
Following the same strategy, we use the \gls{kkt} conditions to find the optimal minimizer, which are given by
\begin{align}
    \nabla\mathcal{L}(\textbf{B},\lambda) &= (\textbf{A}\textbf{A}^H+\lambda \textbf{I})\textbf{B}-\textbf{G}_t=\textbf{0} \label{eq:stationary_condition}\\
    \Vert \textbf{B} \Vert_F^2 &= LP_a \label{eq:power_constraint}
\end{align}
where $\textbf{G}_t = \textbf{D}_t-\textbf{A}\textbf{Z}^H$.
% \begin{equation}
    % \textbf{G}_t = \textbf{D}_t-\textbf{A}\textbf{Z}^H.
% \end{equation}

From \eqref{eq:stationary_condition}, the optimal solution for \eqref{prob:beampattern_approx} can be obtained as 
\begin{equation}\label{eq:opt_B}
    \textbf{B}_{opt}= \left(\textbf{A}\textbf{A}^H+\lambda \textbf{I}\right)^{-1}\textbf{G}_t.
\end{equation}
By substituting \eqref{eq:opt_B} into \eqref{eq:power_constraint}, we have
\begin{equation}\label{eq:B_norm}
    \left\Vert \left(\textbf{A}\textbf{A}^H+\lambda \textbf{I}\right)^{-1}\textbf{G}_t \right\Vert_F^2 = LP_a.
\end{equation}
Let $\textbf{A}\textbf{A}^H=\textbf{Q}\bm{\Lambda}\textbf{Q}^H$ be the eigenvalue decomposition of $\textbf{A}\textbf{A}^H$.
With this, \eqref{eq:B_norm} can be rewritten as
\begin{equation}\label{eq:B_norm2}
\begin{aligned}
    \left\Vert \left (\textbf{Q}\bm{\Lambda}\textbf{Q}^H+\lambda \textbf{I}\right)^{-1}\textbf{G}_t \right\Vert_F^2 
    &= \left\Vert \textbf{Q}\left(\bm{\Lambda}+\lambda \textbf{I}\right)^{-1}\textbf{Q}^H\textbf{G}_t \right\Vert_F^2 \\
    &= \left\Vert \left(\bm{\Lambda}+\lambda \textbf{I}\right)^{-1}\textbf{Q}^H\textbf{G}_t \right\Vert_F^2
\end{aligned}    
\end{equation}
where the last equality follows from the fact that $\textbf{Q}$ is an orthogonal matrix.
Given that $\bm{\Lambda}$ is a diagonal matrix, \eqref{eq:B_norm2} can be further simplified as
\begin{equation}\label{eq:B_norm3}    P(\lambda)=\displaystyle\sum_{i=1}^{{N_t}-K}\displaystyle\sum_{j=1}^L\frac{\left|\left[\textbf{Q}^H\textbf{G}_t\right]_{i,j}\right|^2}{(\lambda_i+\lambda)^2}
\end{equation}
where $\lambda_i$ is the $i$th diagonal entry of $\bm{\Lambda}$.
It can be verified that, for $\lambda > -\min_i \lambda_{i}$, $P(\lambda)$ decreases monotonically.
Hence, the Lagrange dual variable $\lambda$ such that $P(\lambda)=LP_a$ can be found via a line search algorithm (e.g., bisection algorithm, backtracking algorithm).
Once the optimal $\lambda$ is found, the added signal can be updated as $\textbf{B}_{t+1}=\textbf{Q}\left(\bm{\Lambda}+\lambda \textbf{I}\right)^{-1}\textbf{Q}^H\textbf{G}_t$.

% \cite{hunterTutorialMMAlgorithms2004a}

To sum up, a solution can be found through a series of line search, as summarized in Algorithm \ref{alg:beampattern_MM}.
% \cite{boydConvexOptimization2004}
% \cite{grant2014cvx}.

\begin{algorithm}
\caption{MM-LineSearch algorithm for solving \eqref{prob:beampattern}}\label{alg:beampattern_MM}
\begin{algorithmic}
    % \State \textbf{Input:}  Transmit power \( P_{\text{t}} \);
    \State \textbf{Input:} $L$, $P_a$, $\textbf{X}_c$, $\bm{\Delta}$, $\{d(\theta_u)\}_{u=1}^U$, $\epsilon_1$, $T_{max}$
    \State \textbf{Initialize:} Set \( t = 0 \); Randomly choose $\textbf{B}_t\in\mathcal{M}$.
    \While{$t \leq T_{\max}$ and $\Vert g(\textbf{B}_t)-g(\textbf{B}_{t-1})\Vert_F > \epsilon_1$}
        % \State Update $\textbf{B}_t$;        
        \State Find $\lambda$ that minimizes \eqref{eq:B_norm3} via line search
        \State Update $\textbf{B}_{t+1}$ by $\textbf{B}_{t+1}=\textbf{Q}\left(\bm{\Lambda}+\lambda \textbf{I}\right)^{-1}\textbf{Q}^H\textbf{G}_t$        
        \State $t=t+1$
    \EndWhile
    \State Formulate the transmit waveform by  $\textbf{X} = \textbf{X}_c + \bm{\Delta}\textbf{B}_t$
    \State \textbf{Output:} Transmit waveform $\textbf{X}$;
\end{algorithmic}
\end{algorithm}

\subsection{Complexity Analysis}

The total computational cost of the proposed MM-LineSearch algorithm depends on the number of iterations until convergence.
The eigenvalue decomposition of $\textbf{A}\textbf{A}^H$ costs $O\left(({N_t}-K)^3\right)$.
The computational complexity of the MM-LineSearch algorithm is dominated by the computation of $\textbf{G}_t$ and the line search.
% Note that the variables $\textbf{A}$, $\textbf{D}$, $\textbf{U}$, $\bm{\Sigma}$ can be precomputed and reused 
The computation of $\textbf{G}_t$ is dominated by the computation of $\textbf{D}_t$, which costs $O(U({N_t}-K)^2L)$.
Each iteration of the bisection method requires examining \eqref{eq:B_norm3}, which costs $({N_t}-K)L$.
To sum up, the total computational cost for the MM-LineSearch algorithm can be expressed as
\begin{equation}
    O\left(({N_t}-K)^3+ n_{MM}U({N_t}-K)^2L+n_{1D}({N_t}-K)L\right)
\end{equation}
where $n_{MM}$ is the number of MM iterations and $n_{1D}$ is the total number of line search trials over all MM iterations.
It is difficult to characterize an upper bound for the number of iterations needed.
We empirically verified, however, that it converges\footnote{\BHL{The convergence behavior of \gls{mm} algorithms is well-studied. We refer the readers to \cite{hunterTutorialMMAlgorithms2004a} for a more detailed convergence analysis.}} within tens of iterations.

% The computational cost of the \gls{sdr} algorithm for \gls{isac} transmission in \cite{liuJointTransmitBeamforming2020a} is known to be $O({N_t}^{6.5}K^{6.5} \log(1/\epsilon))$ where $\epsilon$ is the target solution accuracy.
% Assuming $n_{MM}() \leq {N_t}^{3.5}K^{3.5}\log(1/\epsilon))$
% $O({N_t}^{6.5}K^{6.5} \log(1/\epsilon))$

\section{
Application 2: Range Sidelobe Suppression
}

In this section, we address the problem of optimizing the correlation properties of the proposed waveform.
In radar sensing, the range sidelobe levels of a waveform impact the quality of range estimation and the capability to separate closely located targets.
In an \gls{isac} point of view, suppressing range sidelobes requires the use of explicit information about the data symbol realization, which hinders the use of traditional linear precoding.
To handle this, previous \gls{isac} work has adopted nonlinear precoding such as \gls{mui} minimization \cite{liuRangeSidelobeReduction2020} or symbol-level precoding \cite{lee2024constant}.
% Unlike radar-only waveform design that focuses on deterministic signals, the communication component of \gls{isac} signals contains inherent randomness due to user data.
% To handle this, previous \gls{isac} works have incorporated explicit data realization to minimize range sidelobes via \gls{mui} minimization \cite{liuRangeSidelobeReduction2020} or symbol-level precoding \cite{lee2024constant}.
By contrast, our proposed method can build \gls{isac} signals from the combination of traditional linear precoding and deterministic sensing signals due to its ability to decouple sensing and communications.
In the following, we detail the range sidelobe suppression process for the proposed waveform.

% Recent work investigated the ranging performance of random signals using the expected \gls{isl} over random data symbols \cite{liuOFDMAchievesLowest2024a}.

\begin{figure}[!t]
\center{\includegraphics[width=.9\linewidth]{./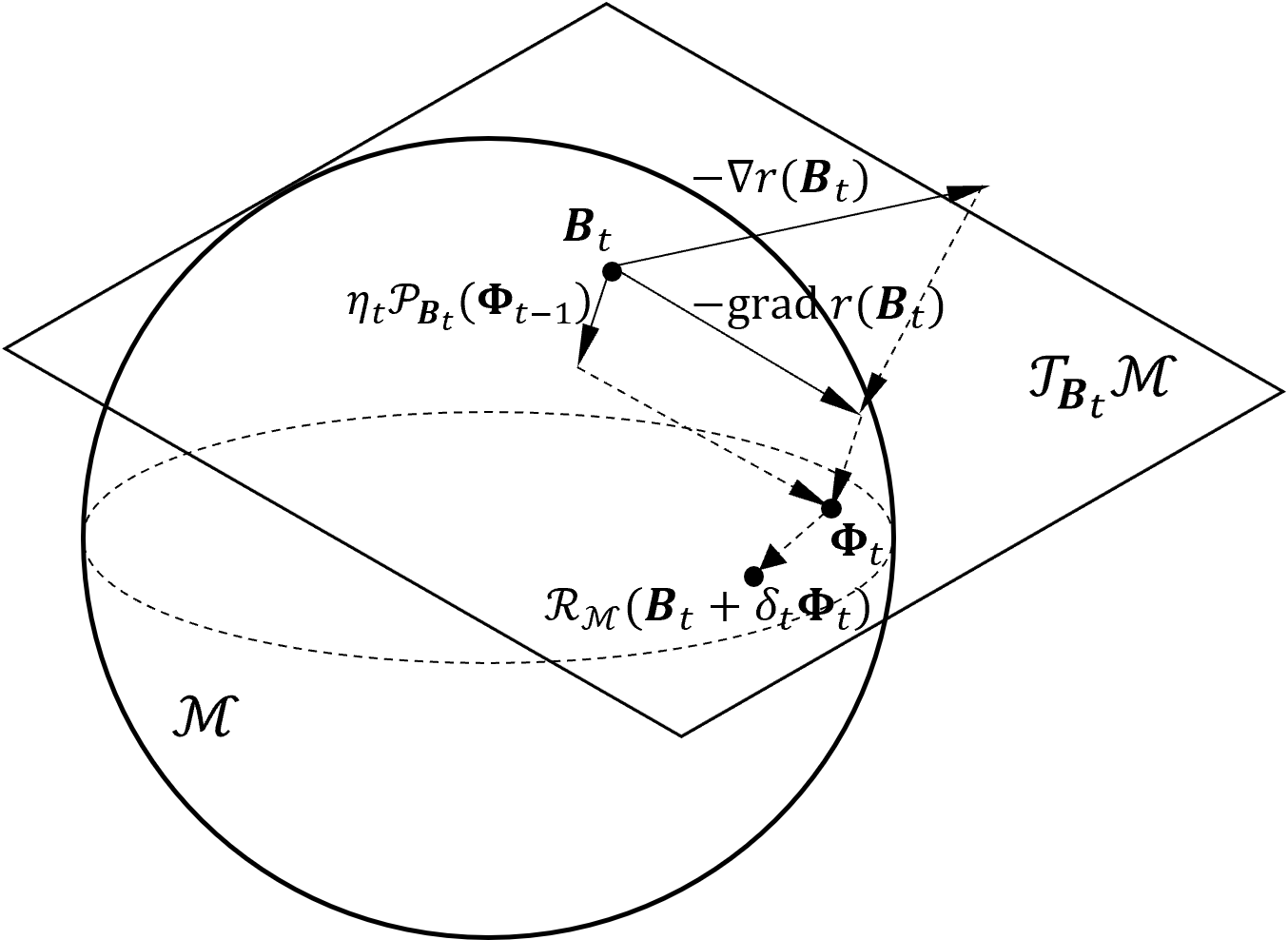}}
\caption{\small RCG algorithm on a complex hypersphere manifold. 
% The descent direction is obtained by linearly combining the Riemannian gradient and descent direction from the previous iteration. 
% After the update on the tangent space, the updated variable is retracted into the feasible region.
}
\label{fig:manifold}
\end{figure}

\subsection{Problem Formulation}
To distinguish multiple targets in different range bins, it is critical to minimize the correlation between the waveform and its shifted version.
\BHL{
The correlation matrix between the transmit waveform and its $\tau$-shifted version can be written as \cite{he2009designing}
\begin{equation}\label{eq:Omega}
    \bm{\Omega}_{\tau}=\textbf{X}\textbf{J}_{\tau}\textbf{X}^H=  \bm{\Omega}_{-\tau}^H.
\end{equation}
The diagonal entries of $\bm{\Omega}_{\tau}$ correspond to the autocorrelation for each antenna, whereas its off-diagonal entries represent the cross-correlations between two different transmit antennas.
The \gls{isl} quantifies the overall waveform correlation level and can be obtained by summing the squares of the cross- and autocorrelations over a set of range bins as \cite{he2009designing}
\begin{equation}\label{eq:ISL_raw}
    \begin{aligned}
    r(\textbf{B})
    &= \displaystyle\sum_{\substack{\tau=-P+1 \\ \tau\neq 0}}^{P-1} \Vert \bm{\Omega}_{\tau}\Vert_F ^2 \\
    % &=\displaystyle\sum_{\substack{\tau=-P+1 \\ \tau\neq 0}}^{P-1} \Vert \textbf{X}\textbf{J}_{\tau}\textbf{X}^H\Vert_F ^2 \\
    &=
    \displaystyle\sum_{\substack{\tau=-P+1 \\ \tau\neq 0}}^{P-1} \left\Vert (\textbf{X}_c+\bm{\Delta}\textbf{B})\textbf{J}_{\tau}(\textbf{X}_c+\bm{\Delta}\textbf{B})^H\right\Vert_F^2
    \end{aligned}
\end{equation}
where $P$ is the largest range bin of interest
% \footnote{Note that the choice of $P$ is application-specific. For example, for \glspl{sar}, it is recommended that $P\ll L$ as only correlations at around $\tau=0$ are relevant \cite{he2009designing}.} 
with $P-1\leq L$.
}
% By substituting $\textbf{X}$ in \eqref{eq:ISL_raw} with \eqref{eq:Tx_Waveform}, the ISL cost function can be recast as
% \begin{equation}
% \label{eq:ISL_cost}
% \begin{aligned}
%    r(\textbf{B})&= \displaystyle\sum_{\substack{\tau=-P+1 \\ \tau\neq 0}}^{P-1} \Vert \textbf{X}\textbf{J}_{\tau}\textbf{X}^H\Vert_F ^2 \\
%    &=
%     \displaystyle\sum_{\substack{\tau=-P+1 \\ \tau\neq 0}}^{P-1} \left\Vert (\textbf{X}_c+\bm{\Delta}\textbf{B})\textbf{J}_{\tau}(\textbf{X}_c+\bm{\Delta}\textbf{B})^H\right\Vert_F^2.
% \end{aligned}    
% \end{equation}

By minimizing \eqref{eq:ISL_raw}, the range resolution of the waveform can be improved.
Hence, the corresponding ISL minimization problem can be formulated as
% \begin{mini}|s|
% {\textbf{B}}{r(\textbf{B})}
% {}{}
% \addConstraint{\Vert \textbf{X}_c + \bm{\Delta}\textbf{B}\Vert^2 }{= P_t}
% \label{prob:ISL}
% \end{mini}
% Using this, the ISL minimization problem can be reformulated as
\begin{mini}|s|
{\small\textbf{B}}{r(\textbf{B})}
{}{}
\addConstraint{\Vert \textbf{B}\Vert^2 = LP_a}{}
\label{prob:ISL}
\end{mini}
It is evident that the formulated problem \eqref{prob:ISL} is nonconvex due to the nonconvex fourth-order objective.
Nonetheless, thanks to the decomposibility of the \gls{sdisac} waveform, \eqref{prob:ISL} has no additional communication constraints, which facilitates reusing well-established existing solutions for \gls{isl} minimization in the \gls{mimo} radar literature with slight modification.

\subsection{Solution via Manifold Optimization}
We adopt a first-order Riemannian manifold optimization method \cite{absil2008optimization} that leverages the geometric structure of the feasible set.
\BHL{Riemannian manifold optimization follows a similar process to traditional gradient descent algorithms, except that the descent operation is done on a Riemannian manifold, as described in Fig. \ref{fig:manifold}.}
A Riemannian manifold is a manifold with tangent spaces that are equipped with a smoothly varying inner product \cite{kimImprovedMultiuserMIMO2009}. 
In the following, we detail the Riemannian manifold optimization process for solving the problem \eqref{prob:ISL}.

Let $\mathcal{M}=\{\textbf{B}\in\mathbb{C}^{{N_t}-K,L}:\Vert \textbf{B}\Vert_F^2= LP_a \}$ be the feasible set of \eqref{prob:ISL}.
$\mathcal{M}$ forms a complex hypersphere manifold \cite{absil2008optimization}, a set of points in $\mathbb{C}^{{N_t}-K,L}$ that are at distance $\sqrt{LP_a}$ from the origin.
Since a complex hypersphere is a Riemannian manifold, there exists a tangent space containing all the tangent vectors of $\mathcal{M}$ for any point $\textbf{B}\in \mathcal{M}$.
The tangent space of the manifold $\mathcal{M}$ is the hyperplane tangent to the complex hypersphere at a point $\textbf{B}\in\mathcal{M} $, which can be written as
\cite{absil2008optimization}
\begin{equation}
    \mathcal{T}_{\textbf{B}}\mathcal{M}=\{ \bm{\Gamma} \in \mathbb{C}^{L \times {N_t}-K}:\text{Re}\{\text{Tr}(\textbf{B}^H\bm{\Gamma})\}=0\}.    
    \label{eq:tangent_space}
\end{equation}

% In Riemannian manifold optimization, a descending direction is found on the tangent space $\mathcal{T}_\textbf{B}\mathcal{M}$
% , which is referred to as the Riemannian gradient.
% Due to the limited space, we refer the readers to \cite{absil2008optimization} for the detail of manifold optimization.

The added signal matrix can be updated on the tangent space of the manifold $\mathcal{M}$ at point $\textbf{B}_t$ as
\begin{equation}
    \textbf{B}_{t+1}=\textbf{B}_t +\delta_t\bm{\Phi}_{t},
    \label{eq:update_B}
\end{equation}
where $t$ is the iteration index, $\delta_t$ is the step size and $\bm{\Phi}_{t}$ is the descent direction.
The step size $\delta_t$ can be obtained by the Armijo backtracking line search method \cite{wright2006numerical}.
\BHL{
We consider the \gls{rcg} method for updating the descent direction \cite{absil2008optimization}, which computes the descent direction by a nonlinear combination of the Riemannian gradient at the current point and the descent direction in the previous iteration.}
% We consider the \gls{rcg} method for updating the descent direction \cite{absil2008optimization}.
% The \gls{rcg} method computes the descent direction by a nonlinear combination of the Riemannian gradient at the current point and the descent direction in the previous iteration.
A Riemannian gradient is the projection of a gradient of the cost function at a given point onto the tangent space, which accounts for the steepest descent direction on the tangent space.
% A Riemannian gradient accounts for the steepest descent direction on the tangent space.

\BHL{
We begin by computing the gradient of the ISL cost function \eqref{eq:ISL_raw} in the Euclidean space, which is given by \eqref{eq:grad_Euc}
\begin{equation}\label{eq:grad_Euc}
    \begin{aligned}
        &\nabla r(\textbf{B}) \\
        &=2\displaystyle\sum_{\substack{\tau=-P+1 \\ \tau\neq 0}}^{P-1} \Big(\bm{\Delta}^H(\textbf{X}_c+\bm{\Delta}\textbf{B})\textbf{J}_{\tau}(\textbf{X}_c+\bm{\Delta}\textbf{B})^H(\textbf{X}_c+\bm{\Delta}\textbf{B})\textbf{J}_{\tau}^T \\ & \qquad  \qquad +\bm{\Delta}^H(\textbf{X}_c+\bm{\Delta}\textbf{B})\textbf{J}_{\tau}^T(\textbf{X}_c  +\bm{\Delta}\textbf{B})^H(\textbf{X}_c+\bm{\Delta}\textbf{B})\textbf{J}_{\tau}\Big).  
    \end{aligned}    
\end{equation}
}
% \begin{figure*}
%     \begin{equation}\label{eq:grad_Euc}
%     \begin{aligned}
%         \nabla r(\textbf{B}) 
%         =2\displaystyle\sum_{\substack{\tau=-P+1 \\ \tau\neq 0}}^{P-1} \Big(\bm{\Delta}^H(\textbf{X}_c+\bm{\Delta}\textbf{B})\textbf{J}_{\tau}(\textbf{X}_c+\bm{\Delta}\textbf{B})^H(\textbf{X}_c+\bm{\Delta}\textbf{B})\textbf{J}_{\tau}^T+\bm{\Delta}^H(\textbf{X}_c+\bm{\Delta}\textbf{B})\textbf{J}_{\tau}^T(\textbf{X}_c  +\bm{\Delta}\textbf{B})^H(\textbf{X}_c+\bm{\Delta}\textbf{B})\textbf{J}_{\tau}\Big).  
%     \end{aligned}        
% \end{equation}
% \hrule
% \end{figure*}
Next, the Riemannian gradient at point $\textbf{B}_t$ is computed by projecting \eqref{eq:grad_Euc} onto the tangent space \eqref{eq:tangent_space}, which can be written as
\begin{equation}\label{eq:Rim_grad}
    \text{grad}\ r(\textbf{B}_t)\triangleq \mathcal{P}_{\textbf{B}_{t}}\left(\nabla r(\textbf{B}_t)\right)     
\end{equation}
where $\mathcal{P}_{\textbf{B}_{t}}\left(\nabla r(\textbf{B}_t)\right)$ returns the projection of the gradient $\nabla r(\textbf{B})$ in the Euclidean space onto the tangent space $\mathcal{T}_{\textbf{B}_t}\mathcal{M}$, which is given by
\begin{equation}
     \mathcal{P}_{\textbf{B}_t}\left(\nabla r(\textbf{B}_t)\right) = 
     \nabla r(\textbf{B}_t) - \text{Re}\left\{\text{Tr}\left(\textbf{B}_t^H\nabla r(\textbf{B}_t)\right)\right\}\textbf{B}_t.
     % \nabla r(\textbf{B}) - \text{Re}\left\{\text{Tr}\left(\textbf{B}^H\nabla r(\textbf{B})\right)\right\}\textbf{B}.
\end{equation}
In the \gls{rcg} framework, the descent direction is obtained by combining \eqref{eq:Rim_grad} and the descent direction in the previous iteration as \cite{wright2006numerical}
\begin{equation}
\label{eq:update_pi}
\begin{aligned}
    \bm{\Phi}_{t+1} &= -\text{grad}\ r(\textbf{B}_{t+1}) + \eta_{t+1} \mathcal{P}_{\textbf{B}_{t+1}}(\bm{\Phi}_t) \\
    &= -\nabla r(\textbf{B}_{t+1}) +\text{Re}\left\{\text{Tr}\left(\textbf{B}_{t+1}^H\nabla r\left(\textbf{B}_{t+1}\right)\right)\right\}\textbf{B}_{t+1} \\
    &\quad +\eta_{t+1}\left(\bm{\Phi}_{t} -\text{Re}\left\{\text{Tr}\left(\textbf{B}_{t+1}^H\bm{\Phi}_{t}\right)\right\}\textbf{B}_{t+1}\right),
\end{aligned}    
\end{equation}
where $\eta_{t+1}$ is the Polak-Ribiere combination coefficient, 
% and $\mathcal{P}_{\textbf{B}_{t+1}}(\cdot)$ returns the projection of a matrix onto the tangent space $\mathcal{T}_{\textbf{B}_{t+1}}\mathcal{M}$.
% The Polak-Ribiere combination coefficient 
which is given by \cite{wright2006numerical}
\begin{equation}\label{eq:update_mu}
    \eta_{t+1} = 
    % \frac{\langle \text{grad}\ r(\textbf{B}_{t+1}), \text{grad}\ r(\textbf{B}_{t+1})-\mathcal{P}_{\textbf{B}_{t+1}}\left(\bm{\Phi}_{t}\right)\rangle}{\langle \text{grad}\ r(\textbf{B}_{t}),\text{grad}\ r(\textbf{B}_{t})\rangle}.
    \frac{\langle \text{grad}\ r(\textbf{B}_{t+1}), \text{grad}\ r(\textbf{B}_{t+1})-\mathcal{P}_{\textbf{B}_{t+1}}\left(\text{grad}\ r(\textbf{B}_{t})\right)\rangle}{\langle \text{grad}\ r(\textbf{B}_{t}),\text{grad}\ r(\textbf{B}_{t})\rangle}.
\end{equation}
Note that the descent direction $\bm{\Phi}_t$ in the previous iteration is on a different tangent space $\mathcal{T}_{\textbf{B}_{t}}\mathcal{M}$ and thus should be projected onto the tangent space $\mathcal{T}_{\textbf{B}_{t+1}}\mathcal{M}$ in the current iteration, as presented in \eqref{eq:update_pi}.

Recall that the descent operation in \eqref{eq:update_B} was done on the tangent space $\mathcal{T}_{\textbf{B}_t}\mathcal{M}$ and thus the updated point may not fall into the feasible region $\mathcal{M}$.
Hence, the updated variable should be retracted into the feasible region $\mathcal{M}$.
To this end, we modify the update equation \eqref{eq:update_B} as 
\begin{equation}\label{eq:update_B_retract}
\textbf{B}_{t+1} = \mathcal{R}_{\mathcal{M}}(\textbf{B}_t+\delta_t \bm{\Phi}_t)
    % \textbf{B}_{t+1} = \frac{\sqrt{P_a}(\textbf{B}_t+\delta_t \bm{\Phi}_t)}{\Vert \textbf{B}_t+\delta_t \bm{\Phi}_t \Vert_F}
\end{equation}
where $\mathcal{R}_{\mathcal{M}}(\cdot)$ is a retraction operation for a complex hypersphere.
The retraction operation maps the updated variable to its nearest point to the feasible region as \cite{liuMUMIMOCommunicationsMIMO2018a}
\begin{equation}
    \mathcal{R}_{\mathcal{M}}(\textbf{B}) = \sqrt{LP_a}\frac{\textbf{B}}{\Vert \textbf{B}\Vert_F}.
\end{equation}
The described optimization process is iterated until convergence as summarized in Algorithm \ref{alg:ISL}.

\begin{algorithm}
\caption{RCG algorithm for ISL optimization}\label{alg:ISL}
\begin{algorithmic}[1]
\State \textbf{Input:} $L$, \( P_{a} \), \( \textbf{X}_c \), \( \bm{\Delta} \), $\epsilon_2$, $T_{max}$
    % \State \textbf{Input:} Transmit power \( P_{\text{t}} \),
    % \State Communication signal \( \textbf{X}_c \);
    % \State Projection matrix \( \bm{\Delta} \);
    \State \textbf{Initialize:} Set \( t = 0 \) and $\bm{\Phi}_t=\textbf{0}$; Draw $\textbf{B}_t\in\mathcal{M}$.
    \While{$t \leq T_{\max}$ and $\Vert\text{grad}\ r(\textbf{B})\Vert_F > \epsilon_2$}
        \State Update $\textbf{B}_t$ using \eqref{eq:update_B}
        \State Update $\eta_{t+1}$ using \eqref{eq:update_mu}
        \State Update $\bm{\Phi}_{t+1}$ using \eqref{eq:update_pi}
        \State $t=t+1$
    \EndWhile
    \State Recover the transmit waveform by  $\textbf{X} = \textbf{X}_c + \bm{\Delta}\textbf{B}_t$
    \State \textbf{Output:} Transmit waveform $\textbf{X}$;
\end{algorithmic}
\end{algorithm}

\subsection{Complexity Analysis}
Each RCG iteration consists of the computation of \eqref{eq:grad_Euc}, \eqref{eq:update_pi} and \eqref{eq:update_B_retract}.
The computational cost of \eqref{eq:grad_Euc} is $O\left((({N_t}-K)N_tL+3N_tL^2+{N_t}^2L)(2P-2)\right)$.
In the case when ${N_t}\gg K$, this can be simplified to $O\left((N_tL^2+{N_t}^2L)(P-1)\right)$.
Calculating \eqref{eq:update_pi} requires calculating \eqref{eq:Rim_grad} and \eqref{eq:update_mu}.
The computation of \eqref{eq:Rim_grad} costs $O(({N_t}-K)^2L)$.
The computational cost of \eqref{eq:update_mu} is $O(({N_t}-K)L)$. 
Hence, calculating \eqref{eq:update_pi} takes $O(({N_t}-K)^2L)$.
 Finally, the computation of \eqref{eq:update_B_retract} is $O(({N_t}-K)L)$.
Combining the above results, the total computational cost can be expressed as 
\begin{equation}
    O\left(n_{RCG}\left((N_tL^2+{N_t}^2L)(P-1)+({N_t}-K)^2L\right)\right)
\end{equation}
where $n_{RCG}$ is the number of RCG iterations.
\BHL{Similar to our discussions  for the MM-LineSearch algorithm, the complexity of the RCG algorithm highly depends on the number of iterations. 
We confirmed the proposed RCG algorithm converges\footnote{We refer the readers to \cite{absil2008optimization} for a detailed convergence analysis.} within tens of iterations in our simulation environment. }

\section{Extension: Imperfect \gls{csi} Scenario}\label{sec:imperfect}
In this section, we extend the proposed method to the imperfect \gls{csi} scenario.
The proposed \gls{sdisac} method relies on the assumption of perfect \gls{csi} to project the additive signal onto the null space of the communication channel.
Yet, in practice, the \gls{csi} at the transmitter will not be perfect for practical reasons such as an estimation error or quantization.
Imperfect CSI destroys the orthogonality between the channel and the additive signal resulting in interference.
% Thus, the effective interference induced by channel mismatch might deteriorate overall communication performance.
% Thus, this issue should be 
In the following, we address the problem of effective interference induced by imperfect \gls{csi}.

For a more general model, we adopt a transmit antenna correlation model, which can be expressed as \cite{kimMIMOSystemsLimited2011}
% \cite{nosrat2011mimo}
\begin{equation}\label{eq:corr_H}
    \textbf{H}=\textbf{H}^w\textbf{R}^{\frac{1}{2}}
\end{equation}
where $\textbf{H}^w$ is the Gaussian part of the channel matrix and $\textbf{R}\in\mathbb{C}^{{N_t}\times {N_t}}$ is a Hermitian positive semidefinite matrix representing the transmit spatial correlation.
We assume that the transmit spatial correlation $\textbf{R}$ changes much more slowly than the channel $\textbf{H}$ and thus can be estimated by the users and fed back to the BS \cite{sanguinettiMassiveMIMO202020}.

To model imperfect \gls{csi}, we follow the previous approach
\cite{kimMIMOSystemsLimited2011,chengwangAdaptiveDownlinkMultiuser2006}
% \cite{nosrat2011mimo,chengwangAdaptiveDownlinkMultiuser2006} 
that uses the Gauss-Markov model, which allows the matrix $\textbf{H}^w$ to be written as \cite{kimMIMOSystemsLimited2011}
% \cite{nosrat2011mimo}
\begin{equation}\label{eq:imperfect_csi}
    \begin{aligned}
        {\textbf{H}}^w
        % =
        % \underbrace{\sqrt{1-\mu^2}\hat{\textbf{H}}^w}_{\tilde{\textbf{H}}^w}+\mu\bm{\mathcal{E}}
        =
        {\sqrt{1-\mu^2}\hat{\textbf{H}}^w}_{}+\mu\bm{\mathcal{E}}
    = \tilde{\textbf{H}}^w+\mu\bm{\mathcal{E}},
    \end{aligned}
\end{equation}
where $\hat{\textbf{H}}^w$ is the estimate of $\textbf{H}^w$ and $\bm{\mathcal{E}}$ is the error matrix.
% and $\tilde{\textbf{H}}^w=\sqrt{1-\mu^2}\hat{\textbf{H}}^w$.
We assume ${\textbf{H}}^w$, $\hat{\textbf{H}}^w$ and $\bm{\mathcal{E}}$ have i.i.d. Gaussian entries with unit variance.
% with $\bm{\mathcal{E}}\sim $\mathcal{CN}(\textbf{0},\textbf{I})$.
For brevity, we denote $\sqrt{1-\mu^2}\hat{\textbf{H}}^w$ by $\tilde{\textbf{H}}^w$.
The parameter $\mu\in[0,1]$ accounts for the accuracy of the \gls{csi}.
$\mu=0$, $\mu=1$, and $0<\mu<1$ correspond to perfect \gls{csi} knowledge, no \gls{csi} knowledge, and partial \gls{csi} knowledge, respectively.
In general, $\mu$ can be modeled as a function of other system parameters, and is therefore assumed to be known.
For example, in minimum \gls{mse} (MMSE) estimation, $\mu$ is determined by the signal-to-noise ratio of the pilot signal \cite{nosrat2011mimo}. 
With \eqref{eq:imperfect_csi}, the channel matrix can be rewritten as \cite{nosrat2011mimo}
\begin{equation}
    \textbf{H}=\left(\tilde{\textbf{H}}^w+\mu\bm{\mathcal{E}}\right)\textbf{R}^{\frac{1}{2}}.
\end{equation}
Using the above channel model, the received communication signal can be expressed as
\begin{align}
    \textbf{Y}_c&=\textbf{H}\textbf{X}+\textbf{W}_c \\ \nonumber
    &=\tilde{\textbf{H}}^w\textbf{R}^{\frac{1}{2}}\textbf{X}+\mu\bm{\mathcal{E}}\textbf{R}^{\frac{1}{2}}\textbf{X} + \textbf{W}_c.
\end{align}
We construct the transmit signal as
\begin{equation}
    % \textbf{X}=\textbf{X}_c+\bm{\Delta}\textbf{B}.
    \textbf{X}=\textbf{X}_c+\bm{\Delta}\textbf{B}.
\end{equation}
% We choose the projection matrix $\bm{\Delta}$ such that $\tilde{\textbf{H}}\textbf{R}^{\frac{1}{2}}\bm{\Delta}=\textbf{0}$.
 The projection matrix $\bm{\Delta}$ can be chosen by examining the \gls{svd} of $\tilde{\textbf{H}}^w\textbf{R}^{\frac{1}{2}}$, such that $\tilde{\textbf{H}}^w\textbf{R}^{\frac{1}{2}}\bm{\Delta}=\textbf{0}$.
With this in mind, the received signal can be rewritten as
\begin{align}    \textbf{Y}_c&=\tilde{\textbf{H}}^w\textbf{R}^{\frac{1}{2}}\textbf{X}+\mu\bm{\mathcal{E}}\textbf{R}^{\frac{1}{2}}\textbf{X} + \textbf{W}_c \\ \nonumber
    &=\tilde{\textbf{H}}^w\textbf{R}^{\frac{1}{2}}(\textbf{X}_c+\bm{\Delta}\textbf{B})+\mu\bm{\mathcal{E}}\textbf{R}^{\frac{1}{2}}(\textbf{X}_c+\bm{\Delta}\textbf{B}) + \textbf{W}_c \\ \nonumber
    &= \tilde{\textbf{H}}^w\textbf{R}^{\frac{1}{2}}\textbf{X}_c+\underbrace{\mu\bm{\mathcal{E}}\textbf{R}^{\frac{1}{2}}(\textbf{X}_c+\bm{\Delta}\textbf{B})}_{\text{Effective interference}} + \textbf{W}_c.
\end{align}
The effective interference energy is given by
\begin{align}
    &\mathbb{E}\left[\Vert\mu\bm{\mathcal{E}}\textbf{R}^{\frac{1}{2}}(\textbf{X}_c+\bm{\Delta}\textbf{B}) \Vert_F^2\right]  \\ \nonumber
    &=\mathbb{E}\left[\text{Tr}(\mu^2(\textbf{X}_c+\bm{\Delta}\textbf{B} )^H\textbf{R}^{\frac{1}{2}}\bm{\mathcal{E}}^H\bm{\mathcal{E}}\textbf{R}^{\frac{1}{2}}(\textbf{X}_c+\bm{\Delta}\textbf{B}) )\right] \\ \nonumber
    &=\text{Tr}\Big(\mu^2(\textbf{X}_c+\bm{\Delta}\textbf{B} )^H\textbf{R}^{\frac{1}{2}}\underbrace{\mathbb{E}\left[\bm{\mathcal{E}}^H\bm{\mathcal{E}}\right]}_{\textbf{I}_{N_t}}\textbf{R}^{\frac{1}{2}}(\textbf{X}_c+\bm{\Delta}\textbf{B}) \Big) \\ \nonumber
    % &= \text{Tr}(\beta^2(\textbf{X}_c+\bm{\Delta}\textbf{B} )^H\textbf{R}(\textbf{X}_c+\bm{\Delta}\textbf{B}) ) \\ \nonumber
    &=
    \Vert\mu\textbf{R}^{\frac{1}{2}}(\textbf{X}_c+\bm{\Delta}\textbf{B}) \Vert_F^2.
\end{align}

% To minimize the impact of the imperfect \gls{csi}, the effective interference energy can be jointly minimized along with the radar objective.
To minimize the impact of the effective interference, we consider jointly minimizing it along with the radar objective.
In light of this, the beampattern matching problem \eqref{prob:beampattern} can be modified as 
\begin{mini}|s|
{\small\textbf{B}}{\omega g(\textbf{B})+ (1-\omega)\Vert\mu\textbf{R}^{\frac{1}{2}}(\textbf{X}_c+\bm{\Delta}\textbf{B}) \Vert_F^2}
{}{}
\addConstraint{\Vert \textbf{B}\Vert_F^2=LP_a}{}
\label{prob:beampattern_INR}
\end{mini}
where $\omega\in [0,1]$ is the weight.
The above problem can be still solved by the proposed MM-LineSearch algorithm.

\section{Discussions}
{
% \color{blue}
This subsection further discusses the benefits and potential issues of the proposed \gls{sdisac} waveform.
The proposed method decouples the generation of communication and radar waveforms, while these waveforms are jointly transmitted by the shared antenna array. 
This approach offers benefits, such as the reduced complexity and the enhanced compatibility with existing waveforms for both radar and communications.
This facilitates the implementation of \gls{isac} technology within legacy communication standards and infrastructure.
% Thus, this can help reduce the cost of upgrading infrastructure and  
% Nonetheless, a potential trade-off is that the proposed \gls{sdisac} waveform may not achieve the Pareto frontier due to separate optimization.

% One of the main benefits of the proposed method is its compatibility with any communications signal.
% As mentioned earlier, any communication signal can be used for the proposed method, from codebook-based precoders to interference exploitation precoders.
% Additionally, our approach enables the reuse of waveform design algorithms for communication- and radar-only systems with no or few modifications.
% We emphasize that our approach differs from non-orthogonal multiple access (NOMA) or pilot superimposition, in that it exploits information about pre-existing communication signals to design the waveform rather than treating them as interference or noise.
}

% The proposed waveform design method is a suboptimal approach that trades off performance with complexity.

% \subsection{Degrees of Freedom (DoF)}
% \textbf{Degrees of Freedom (DoF):}
% {
% \color{blue}
% Reviewer comment: The discussion on degrees of freedom lacks rigor. More analytical details should be provided as a complement.
% }
\subsection{Power and \Gls{dof} Trade-offs}
\BHL{ In the proposed framework, the communication and sensing signals share the transmit power, and the trade-off between the two functions is controlled by the parameter $\beta$.
Without loss of generality, in this paper, we consider allocating the minimum power required to satisfy the user \gls{qos} to communications.
% Without loss of generality, in this paper, $\beta$ is determined by assigning the minimum power satisfying the user \gls{qos} requirement to communications.
% In the proposed method, a portion of the transmit power is assigned to communications to meet the user \gls{qos}.
Then the residual is used to design the added signal for optimizing radar performance.
As the power split ratio $\beta$ approaches unity, the power available for the added signal decreases, resulting in the communication signal dominating the overall transmit signal.}
The proposed method takes a two-step approach to designing transmit waveforms, where the added signal is designed on the null space of the communication channel, using the residual \gls{dof} after formulating communication beamforming.
The \gls{dof} of the added signal is ${N_t}-K$ assuming a full-rank channel matrix $\textbf{H}$.
In the massive \gls{mimo} scenario where ${N_t}\gg K$, ${N_t}\approx {N_t}-K$.
However, if ${N_t}$ is not sufficiently large compared to $K$, $K$ would impact performance due to the lower \gls{dof} of the added signal.

% Thus, ${N_t}-\text{Rank}(\textbf{H})$ available bases are used to optimize the radar cost function.
% Assuming ${N_t}\gg K$, we have $\text{Rank}(\textbf{H})\leq K$.

% As shown above, in general, the communication signal $X_c$ contains a random component that accounts for the user data.
% In contrast, the sensing signal is typically deterministic to attain the desired spatial and temporal properties (e.g., ambiguity function).

\subsection{Codebook-Based Null-Space Feedback}

In multi-user \gls{mimo} systems, the BS in general needs high-resolution \gls{csi} to control multi-user interference.
Therefore, in such systems, the BS would have enough information to find the null-space of the channel and formulate the matrix $\bm{\Delta}$.
By contrast, in some systems, communication users may not provide full \gls{csi} to the BS due to overhead constraints, even if they possess accurate channel information.
For example, in single-user \gls{mimo}, users might only feed back the most dominant direction in their channel.
In such cases, it becomes challenging to compute the null-space projection matrix $\bm{\Delta}$ at BS sides due to the limited \gls{csi} resolution.
A practical approach to this problem is to define a codebook $\mathcal{R}$ containing null-space basis vectors.
Users can select null-space basis vectors from the codebook $\mathcal{R}$ and feed back the corresponding indices to the BS.
Then, the BS can formulate the null-space projection matrix $\bm{\Delta}$ based on the received null-space feedback. 
Obviously, there is a trade-off between codebook resolution and feedback overhead, similar to classical codebook-based beamforming strategies \cite{loveLimitedFeedbackUnitary2005a}.

\subsection{Hybrid Random and Deterministic Signal}

Recent ISAC work revealed the trade-off between random and deterministic signal transmission in ISAC systems.
The proposed method exploits explicit symbol information to design the additive signal, which is referred to as a signal-dependent approach in the literature \cite{lee2024constant,liuDualFunctionalRadarCommunicationWaveform2021,liuRangeSidelobeReduction2020}. 
% \cite{luRandomISACSignals2024} liuOFDMAchievesLowest2024a,.
% \cite{liu2024ofdm,luRandomISACSignals2024a,luOptimalPrecodingRandom2024}
Signal-independent approaches focus on optimizing the second-order statistics of the waveform (e.g., covariance) \cite{liuJointTransmitBeamforming2020a,liuCramerRaoBoundOptimization2022}, which may be sensitive to the signal realization.
Conversely, signal-dependent approaches preserve waveform properties such as the beampattern, ambiguity function, and correlations.
Therefore, in general, signal-dependent approaches outperform signal-independent approaches \cite{luRandomISACSignals2024a}.
% , especially when the codeword length is finite.
Nonetheless, incorporating explicit signal information into waveform design accompanies higher complexity, posing a new challenge.

To handle the randomness of the data symbols, recent ISAC work characterized or optimized the expectation of sensing metrics such as the \gls{lmmse} over random data and channel realizations \cite{luRandomISACSignals2024a}.
In contrast to these works, the proposed \gls{sdisac} waveform can explicitly optimize sensing metrics without having to optimize the expected sensing performance.

% To tackle this problem, recent studies have optimized or characterized the expectation of sensing performance for random signals over channel and signal statistics using ergodic \gls{crb} \cite{luRandomISACSignals2024a}, ergodic \gls{lmmse}, and expected integrated sidelobe level \gls{isl} \cite{liuOFDMAchievesLowest2024a}.

% The rational of this approach is that combining random communication and deterministic sensing signal is hard so .
% While these approaches optimize the average sensing performance of random signals, the proposed \gls{sdisac} waveform can deterministically optimize sensing metrics by superimposing on random communication signals.
% That is, the communication signal can be designed in a signal-independent manner, while the added signal is constructed in a signal-dependent manner on top of the communication signal.

% \subsection{Enabling Two Timescale Operation}
\subsection{Asynchronous Update of Sensing and Communication Signals}
Another important benefit of the decomposability offered by the proposed method is that communications and radar can be updated asynchronously.
As described earlier, the radar function is optimized by designing added signals on top of communication signals.
Thus, while the communication signal remains constant for a duration $T_c$, the added signal can be updated multiple times at shorter intervals $T_r$.
This enables two-timescale signal updates, improving the flexibility of implementing unified \gls{isac} waveforms.
% , as illustrated in Fig. \ref{fig:timescale}. 
Specifically, the two-timescale operation can be useful for scenarios where radar signals should be frequently updated such as space-time adaptive processing (STAP).
Since updating the beamforming matrix may cause overhead such as feedback and pilot transmission, the two-timescale \gls{isac} operation can assist in mitigating such overhead for communication systems.
% This prevents the communication signal from being updated unnecessarily, which offers benefits such as lower control overhead, .

% \cite{suhNullSpaceProjectionbased2023}

% \begin{figure}[!t]
% \center{\includegraphics[width=.85\linewidth]{./}}
% \caption{\small Signal structure for two-timescale \gls{isac} operations.}
% \label{fig:timescale}
% \end{figure}

\subsection{Correlation between Sensing and Communication Channels}

As with \gls{sdma}, the proposed method is suitable for situations where the target and users are well separated in the angular domain.
This is critical especially for spatial beampattern matching.
The proposed method projects the added signal onto the null-space of the communication channel to optimize sensing performance.
This implies that if the subspaces of the communication and sensing channels are highly aligned, then \gls{nsp} will result in significant signal energy loss.
% This implies that if the subspaces of the communication and sensing channels are highly correlated, then \gls{nsp} will result in significant signal energy loss.
This scenario is often called a highly-coupled scenario where concentrating signal energy in the common dominant direction of the communication and sensing channels is an effective solution, as in applications like vehicular networks \cite{liu2020radar}.
% Therefore, in this scenario, it would be effective to concentrate a signal energy in the common dominant direction of the communication and sensing channels \cite{liu2020radar}.
The subspace distance or the subspace overlap coefficient can be used to quantify the overlap between the sensing and communication subspaces \cite{xiongFundamentalTradeoffIntegrated2023}.

\BHL{
\subsection{Wideband Extension}
The proposed \gls{sdisac} method can be extended to wideband setup such as \gls{ofdm}.
For such an extension, added signals must be designed in a per-subcarrier manner due to frequency-selective fading.
The transmit signal in the wideband system can be modified as
\begin{equation}    \textbf{X}_q=\textbf{F}_q\textbf{S}_q+\bm{\Delta}_q\textbf{B}_q
\end{equation}
where $q$ is the subcarrier index.
The projection matrix $\bm{\Delta}_q$ should be calculated for every subcarrier, which may lead to increased complexity.
Wideband extension improves range resolution but also poses challenges like adaptive \gls{ofdm} radar signal processing, waveform design, and ambiguity function shaping \cite{senAdaptiveOFDMRadar2011,senOFDMMIMORadar2010,senAdaptiveDesignOFDM2010a}.
% Moreover, this also raises intriguing problems from a radar perspective, such as adaptive \gls{ofdm} radar signal processing, waveform design, and ambiguity function shaping \cite{senAdaptiveOFDMRadar2011,senOFDMMIMORadar2010,senAdaptiveDesignOFDM2010a}.
While this is an intriguing direction, we leave it as future work since it is beyond the scope of this paper.}

% \subsection{\gls{papr} Considerations}

\subsection{Other Usage}

Although this paper focuses on \gls{isac} applications, the proposed \gls{sdisac} waveform can be applied to other problems, such as \gls{swipt} and \gls{papr} minimization.
\gls{swipt} aims to transmit information bits to communication users while transferring power to energy-harvesting devices.
For example, received energy at the harvesting devices can be maximized by solving the following problem:
\begin{maxi}|s|
{\small\textbf{B}}{\Vert\textbf{H}_h(\textbf{X}_c+\bm{\Delta}\textbf{B}) \Vert_{F}}
{}{}
\addConstraint{\Vert\textbf{B}\Vert_2 \leq LP_a}{}
\end{maxi}
where $\textbf{H}_h$ is the channel from the BS to the energy-harvesting devices.
In addition, the additive signal can be used to minimize the \gls{papr}, which can be expressed as 
\begin{mini}|s|
{\small\textbf{B}}{\Vert\textbf{X}_c+\bm{\Delta}\textbf{B} \Vert_{\infty}}
{}{}
% \addConstraint{\Vert\textbf{B}\Vert_2 \leq LP_a}{}
\addConstraint{\Vert\textbf{B}\Vert_2 = LP_a}{}.
\end{mini}
% which is convex with respect to $\textbf{B}$ and can be solved via public numerical tools.

% {
% \color{blue}

% Question: the proposed method is essentially a suboptimal algorithm that may not achieve the Pareto frontier.
% Can we characterize those conditions?

% }

\section{Simulation Results}

% {
% \color{blue}
% Reviewer comment:
% 1: The comparison with [5] is somewhat unfair, as mentioned by the authors. It is suggested to explore an apples-to-apples assessment.
% }

In this section, we evaluate our proposed waveform design method through rigorous simulations.
% We set $P_t=1$, $\sigma^2=0.01$, $L=30$, ${N_t}=20$
We use the following simulation setting for performance evaluation unless otherwise specified.
The transmit power is $P_t=1$ and communication noise variance is $\sigma^2_c=0.01$.
% , i.e., the transmit signal-to-noise (SNR) is $P_t/\sigma^2=\SI{20}{\decibel}$.
% The codeword length is set to $L=30$.
% \cite{liuJointTransmitBeamforming2020a}.
The BS transmit array is a uniform linear array (ULA) of ${N_t}=20$ antennas with half-wavelength spacing.
The codeword length was set to $L=64$.
We adopt the uncorrelated Rayleigh channel, where each entry of the channel matrix is drawn from $\mathcal{CN}(0,1)$.
The stopping criteria are set to $\epsilon_1=\epsilon_2=\SI{e-04}{}$.
Communication signals are generated using Algorithm \ref{alg:comm} to meet the minimum \gls{qos} requirements of the communication users.
The remaining power $P_a=P_t-P_c$ after constructing the communication signal is assigned to the added signal $\textbf{B}$ to enable radar functionality.

\begin{figure}
     \centering
     \begin{subfigure}[b]{0.45\textwidth}
         \centering
            \center{\includegraphics[width=.8\linewidth]{./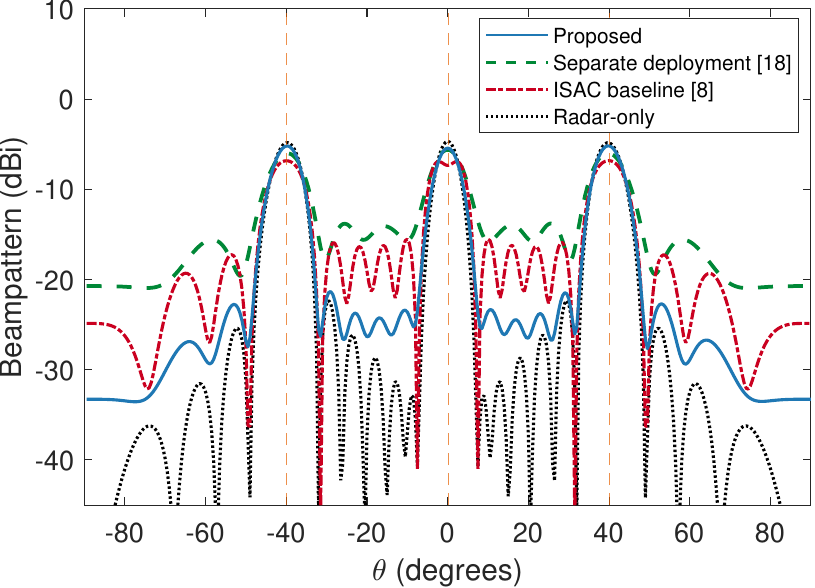}}
            \caption{ $\gamma_k=\SI{10}{\decibel}$.}
            \label{fig:beam_pattern_10dB}
         \end{subfigure}
     \hfill
     \begin{subfigure}[b]{0.45\textwidth}
         \centering
            \center{\includegraphics[width=.8\linewidth]{./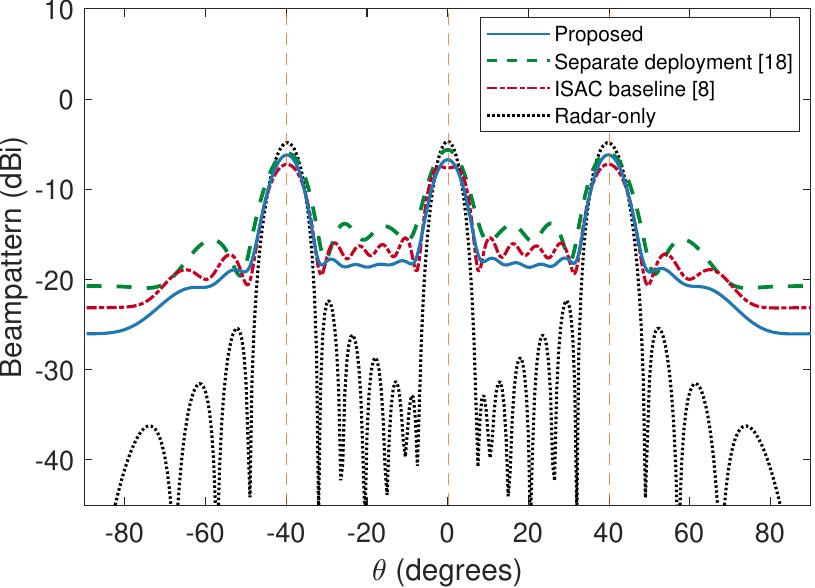}}
            \caption{ $\gamma_k=\SI{20}{\decibel}$.}
            \label{fig:beam_pattern_20dB}
         \end{subfigure}
     \caption{\small Beampatterns generated by the proposed method and baselines for $\gamma_k=\SI{10}{\decibel}$ and $\gamma_k=\SI{20}{\decibel}$.
    The number of users is $K=4$.
    The average power split ratios to meet $\gamma_k=\SI{10}{\decibel}$ and $\gamma_k=\SI{20}{\decibel}$ were $P_c = 0.049$ and $P_c=0.4936$, respectively.
Vertical lines correspond to three targets positioned at angles $\theta_1=-40^{\circ}$, $\theta_2=0^{\circ}$, and $\theta_3=40^{\circ}$.
     }
     \label{fig:Beam_Pattern}
\end{figure}

\begin{figure}[!t]
% \center{\includegraphics[width=.78\linewidth]{./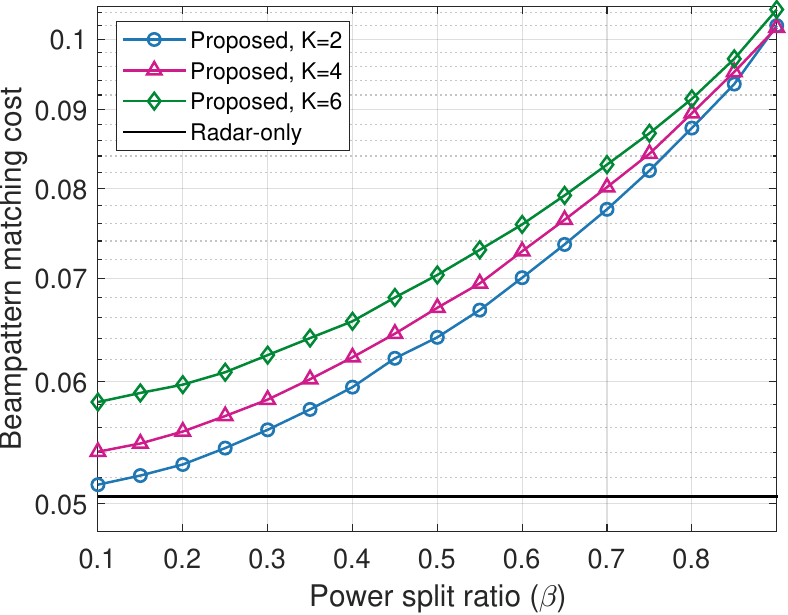}}
\center{\includegraphics[width=.75\linewidth]{./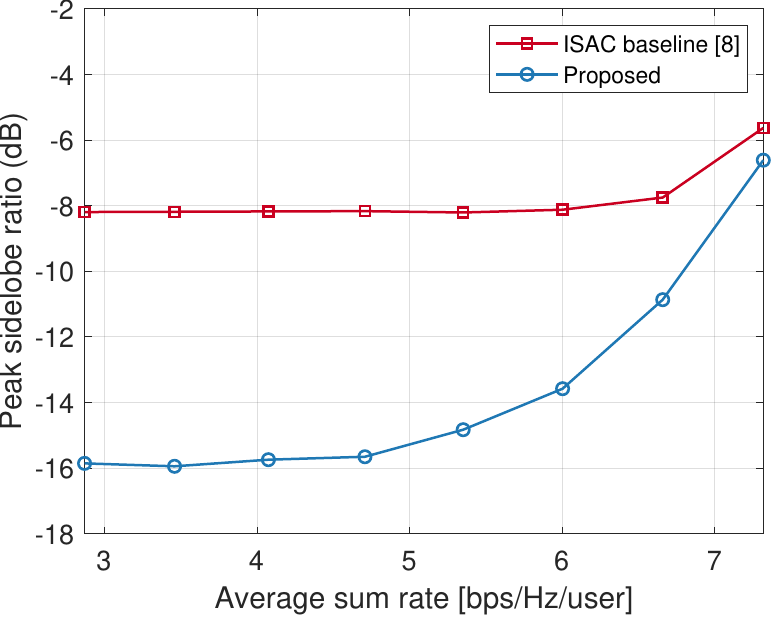}}
\caption{ \small 
Trade-off between the peak sidelobe ratio and average sum rate [bps/Hz/user] for the proposed scheme and \gls{isac} baseline \cite{liuJointTransmitBeamforming2020a}.
}
\label{fig:PSLR_vs_rate}
\end{figure}

\begin{figure}[!t]
\center{\includegraphics[width=.75\linewidth]{./Fig/}}
\caption{ \small 
Beampattern matching cost function versus power split ratio $\beta$ for $K=2$, $K=4$, and $K=6$.
% $\beta=1$ corresponds to the communication only waveform.
% The beampattern MSE increases as the PSR increases due to the decreasing power \gls{dof}.
}
\label{fig:MSE}
\end{figure}

% \begin{figure}[!t]
% \center{\includegraphics[width=.75\linewidth]{./}}
% \caption{ \small 
% Beampattern matching cost function versus $K/N_T$ for $\beta=0.2$, $\beta=0.4$, and $\beta=0.6$.
% }
% \label{fig:MSE}
% \end{figure}

\subsection{Beampattern Optimization}
In this subsection, we evaluate the beampattern matching performance of the proposed algorithm.
We use 1000 channel realizations to evaluate the average performance of the proposed algorithms unless otherwise specified.
% Note that $P_t/\sigma^2=10$, i.e., the transmit SNR is set to $10\ dB$.
% We consider quadrature phase shift keying (QPSK) with unit power for the constellation, i.e., $\mathcal{S}=[\frac{1}{\sqrt{2}}(1+j),\frac{1}{\sqrt{2}}(-1+j),\frac{1}{\sqrt{2}}(1-j),\frac{1}{\sqrt{2}}(-1-j)]$.
We use $U=360$ uniformly sampled angles within a range $[-90^{\circ},90^{\circ}]$ to calculate the beampattern matching cost, which corresponds to the angle resolution of $0.5^{\circ}$.
% We set the discretized angle range to be $[0^{\circ},180^{\circ}]$ with the angle resolution of $0.5^{\circ}$, i.e., $\theta_u=u^{\circ}$ for $u=1,2,\dots,180$. 
% A rectangular beampattern was chosen to be a reference beampattern, which is given by \vspace{-1mm}
This paper adopts a rectangular beampattern for the reference beampattern, which can be expressed as
\begin{equation}
    d(\theta)=
    \begin{cases}
    1,& \text{if } \theta_m - \Delta_{\theta}/2\leq\theta\leq  \theta_m + \Delta_{\theta}/2 \ \forall m=1,2,\dots,M\\
    0,              & \text{otherwise}
\end{cases}
% \vspace{-1mm}
\end{equation}
where $\Delta_{\theta}$ is the beam width.
We configured three targets, i.e., $M=3$, at angles $\theta_1=-40^{\circ}$ and $\theta_2=0^{\circ}$, and $\theta_3=40^{\circ}$
The beam width $\Delta_{\theta}$ is set to $10^{\circ}$.
For performance comparison, we use the algorithms in \cite{liuMUMIMOCommunicationsMIMO2018a} and  \cite{liuJointTransmitBeamforming2020a}.
% the radar-only scheme as baselines.
\cite{liuMUMIMOCommunicationsMIMO2018a} synthesized a radar waveform for the separate deployment scenario where the radar and communication arrays are colocated yet not shared and the radar signal is projected onto the null space of the communication channel.
% , which we refer to as the separate deployment baseline.
% , which we label as the separate deployment (SD) baseline.
For the baseline \cite{liuMUMIMOCommunicationsMIMO2018a}, the numbers of radar and communication antennas are set to 14 and 6, respectively.
Conversely, \cite{liuJointTransmitBeamforming2020a} designed a beamformer using beampattern matching optimization subject to per-user \gls{sinr} constraints.
% by joint radar and communications transmission using a shared array via a \gls{sdr} technique.
% , which we label as the \gls{sdr} baseline.
The radar-only baseline solves the same beampattern problem \eqref{prob:beampattern} without the communication signal, i.e., $\textbf{X}=\textbf{B}$ and $\bm{\Delta}=\textbf{I}_{N_t}$.

Fig. \ref{fig:beam_pattern_10dB} and Fig. \ref{fig:beam_pattern_20dB} show the beampatterns of the proposed method and baselines for two \gls{qos} constraints $\gamma_k=\SI{10}{\decibel}$, and $\gamma_k=\SI{20}{\decibel}$.
The number of users is set to $K=4$.
In both cases, the radar-only baseline outperforms \gls{isac} schemes in terms of spatial sidelobe level, providing a lower bound of \gls{isac} schemes.
Surprisingly, the proposed method achieved lower sidelobes and higher mainlobes than \gls{isac} baseline \cite{liuJointTransmitBeamforming2020a}.
% than the \gls{sdr} baseline.
The performance gain mainly comes from the different cost functions of the two schemes, indicating the effectiveness of minimizing \eqref{eq:beampattern_cost0}.
The baseline \cite{liuMUMIMOCommunicationsMIMO2018a} shows the worst performance among all schemes.
This is because \cite{liuMUMIMOCommunicationsMIMO2018a} splits antennas for communication and sensing, limiting the lower spatial resolution of communication and sensing.
In contrast, the proposed method exploits all antennas in the shared array, approximating beampattern with higher spatial \glspl{dof}.
 Notably, the overall sidelobe levels of the \gls{isac} schemes are higher for $\gamma_k=\SI{20}{\decibel}$ than for $\gamma_k=\SI{10}{\decibel}$ due to the sensing-communication trade-off, whereas \cite{liuMUMIMOCommunicationsMIMO2018a} remains unaffected by the \gls{qos} threshold as it uses a dedicated radar array.

Fig. \ref{fig:PSLR_vs_rate} evaluate the trade-off between the peak sidelobe ratio and the average sum rate for the proposed MM-LineSearch algorithm and baseline \cite{liuJointTransmitBeamforming2020a} when $K=4$.
The peak sidelobe ratio is a measure of the spatial resolution of the waveform, which is defined as the ratio of the peak sidelobe power to the peak mainlobe power.
For both schemes, it is evident that the peak sidelobe ratio tends to increase with the average sum rate, indicating a trade-off between sensing and communication performance.
The proposed method demonstrates a more favorable trade-off due to the introduced cost function \eqref{eq:beampattern_cost0}, which aligns with the results shown in Fig. \ref{fig:beam_pattern_10dB} and Fig. \ref{fig:beam_pattern_20dB}.
 This suggests that the proposed approach achieves higher spatial resolution without requiring a joint optimization of sensing and communications.
 The performance gap between the proposed scheme and baseline diminishes as the average sum rate increases.
 This convergence occurs because the transmit signal becomes dominated by the communication signal with less available power allocated to the added signal.
The peak sidelobe ratios of the two schemes tend to converge under these power-limited conditions.

 We assess the impact of the power split ratio $\beta=P_c/P_t$ on the beampattern matching performance.
Fig. \ref{fig:MSE} plots the beampattern matching cost defined in \eqref{eq:beampattern_cost0} with increasing power split ratio for $K=2$, $K=4$, and $K=6$.
The radar-only scheme provides a lower bound for the proposed method, and as the power split ratio decreases, the proposed scheme gradually converges to the radar-only performance.
Nonetheless, the cost function is higher with higher $K$ due to the lower \gls{dof}.
It is shown that the beampattern MSE increases as the power split ratio approaches unity.
As the power split ratio increases, the power budget for the added signal decreases as a trade-off, leading to inferior beampattern approximation.
Moreover, the cost function is higher for the higher number of users because the rank of the projection matrix diminishes with the increasing number of users. 
The difference among the three cases becomes less prominent at high power split ratio.
% , i.e., when the power split ratio is close to one.
This suggests that at high power split ratio, the residual power is a more dominant factor to performance than the number of available basis vectors in beampattern approximation.

\begin{table}[!t]
        \centering
        \renewcommand{\arraystretch}{1.4}
        \caption{CPU time comparison of MM-LineSearch and \gls{isac} baseline \cite{liuJointTransmitBeamforming2020a} for different values of $K$.}
        \begin{tabular}{|l|c|c|c|c|c|}
            \hline
            \textbf{Method} & $K=1$ & $K=2$ & $K=3$ & $K=4$ & $K=5$ \\ \hline \hline
            MM-LineSearch & 0.56& 0.54& 0.47 & 0.42 & 0.35\\ \hline
            SDR \cite{liuJointTransmitBeamforming2020a}& 2.67& 2.75& 2.81&2.97 &3.06 \\ \hline
        \end{tabular}
        \label{tab:comparison}
\end{table}

In Table \ref{tab:comparison}, we compare the CPU time of the proposed MM-LineSearch algorithm with the \gls{isac} baseline from \cite{liuJointTransmitBeamforming2020a}. The results indicate that MM-LineSearch achieves significantly faster convergence. 
The CPU time for \cite{liuJointTransmitBeamforming2020a} grows rapidly with increasing $K$, as it requires designing $K$ covariance matrices, resulting in an increase in the number of optimization variables proportional to $K^6$. 
In contrast, the MM-LineSearch algorithm maintains efficiency as the number of users grows, even accelerating since the matrix size decreases with larger $K$.
\BHL{This is because the size of matrix $\textbf{B}$, $(N_t-K)\times L$ decreases as $K$ grows.
However, this comes at the cost of a lower \gls{dof} as the rank of the projection drops with the increase in $K$. }
% Yet, this comes at the cost of a lower \gls{dof} as the rank of the projection drops with the increase in $K$.

\begin{figure}[!t]
\center{\includegraphics[width=.75\linewidth]{./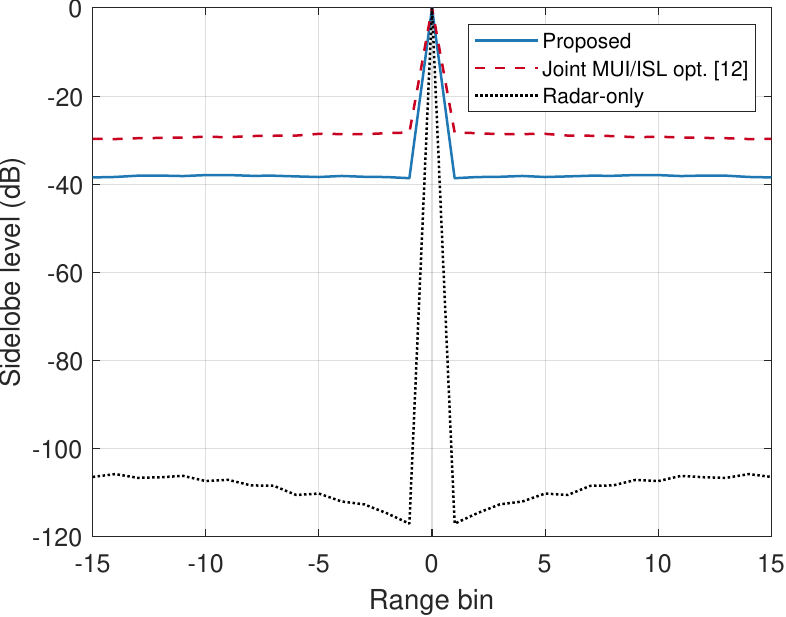}}
\caption{ \small Range sidelobes of the proposed \gls{sdisac} waveform, and baseline waveforms for $K=4$.
% the baseline \gls{isac} waveform \cite{liuRangeSidelobeReduction2020} and the radar-only waveform.
The achievable sum rates for the proposed \gls{sdisac} waveform and the joint \gls{mui}/\gls{isl} optimization scheme \cite{liuRangeSidelobeReduction2020} were $11.4791$ and $10.7713$ [bps/Hz], respectively.
% The proposed waveform outperforms the baseline \cite{liuRangeSidelobeReduction2020}  
}
\label{fig:sidelobe}
% \vspace{-3mm}
\end{figure}

% \begin{figure}[!t]
% \center{\includegraphics[width=.75\linewidth]{}}
% \caption{ \small Trade-off between \gls{isl} and sum rate for $K=2$.
% }
% \label{fig:ISL_vs_rate}
% \end{figure}

\begin{figure}[!t]
     \centering
     \begin{subfigure}[b]{0.24\textwidth}
         \centering
            \center{\includegraphics[width=.95\linewidth]{./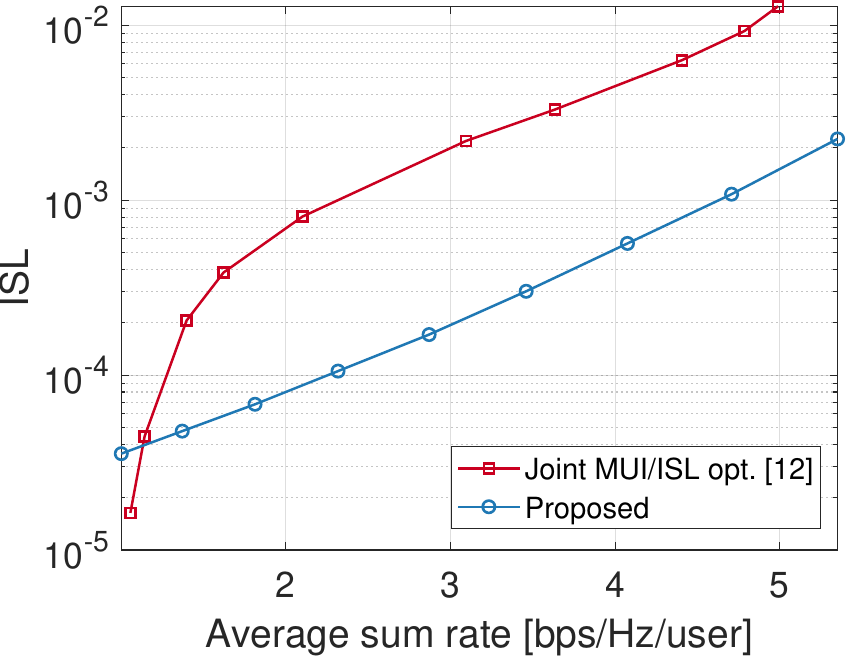}}
            \caption{ $K=2$.}
            \label{fig:ISL_vs_rate_K2}
         \end{subfigure}
     \hfill
     \begin{subfigure}[b]{0.24\textwidth}
         \centering
            \center{\includegraphics[width=.95\linewidth]{./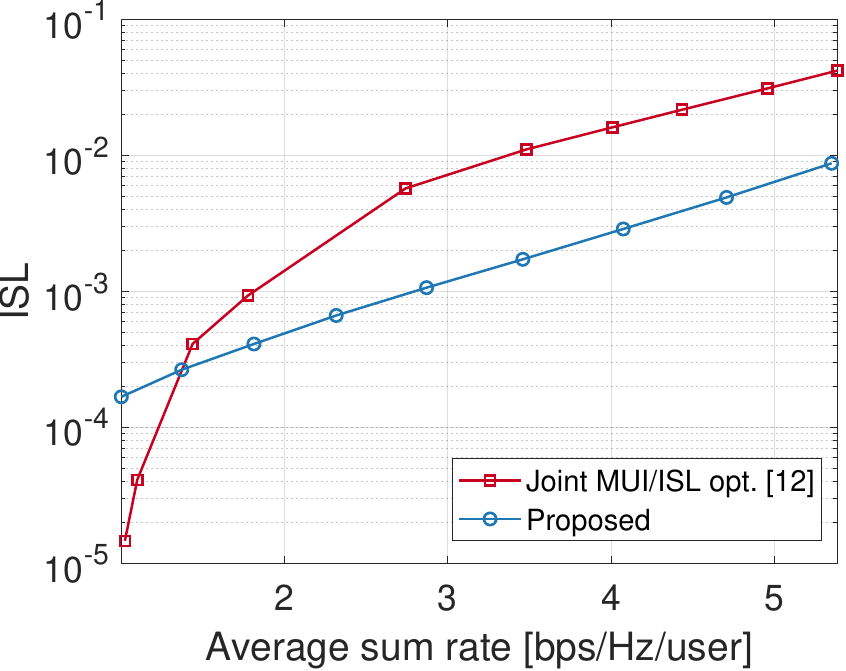}}
            \caption{ $K=4$.}
            \label{fig:ISL_vs_rate_K4}
         \end{subfigure}
     \caption{\small Trade-off between \gls{isl} and average sum rate [bps/Hz/user] for $K=2,4$.
     }
     \label{fig:ISL_vs_rate}
\end{figure}

\begin{figure}[!t]
\center{\includegraphics[width=.75\linewidth]
% {./}}
{./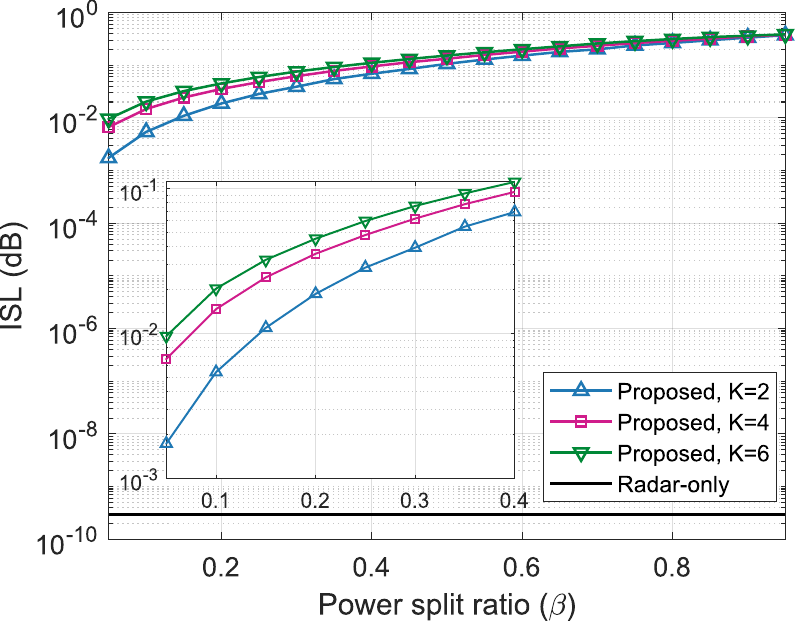}}
\caption{ \small 
\gls{isl} versus power split ratio for $K=2,4,6$.
% ISLs of the proposed \gls{sdisac} waveform for user numbers $K=4,6,8$ and radar only waveform.
}
\label{fig:ISL}
\end{figure}

\subsection{Range Sidelobe Suppression}
In this subsection, we assess the range sidelobe suppression performance of the \gls{sdisac} waveform.
For this simulation, we set $P=16$.
We use the algorithm in \cite{liuRangeSidelobeReduction2020} as a baseline, which jointly optimizes \gls{mui} and ISL.
For a fair comparison, we configured the weight parameters for the baseline for which the two \gls{isac} schemes have comparable sum rates.
Each scheme is averaged over $200$ Monte-Carlo simulations.

Fig. \ref{fig:sidelobe} evaluates the sidelobe suppression performance of the proposed \gls{sdisac} waveform and baselines for $K=4$.
% and $\gamma_k=\SI{6}{\decibel}$.
\BHL{The sidelobe level on the y-axis represents the sum of the cross- and autocorrelation power for each range bin, i.e., $\Vert \bm{\Omega}_{\tau}\Vert_F^2$ in \eqref{eq:ISL_raw}.
% \eqref{eq:Omega}
}
For fair comparison, we set the sum rates of the proposed scheme and the baseline \cite{liuRangeSidelobeReduction2020} comparably to $11.4791$ and $10.7713$ [bps/Hz], respectively.
It can be observed that the \gls{sdisac} waveform outperforms the baseline \cite{liuRangeSidelobeReduction2020} by approximately \SI{10}{\decibel} despite its higher sum rates, showing the superior trade-off of the proposed method.
This is mainly because the \gls{sdisac} waveform allows for linear precoding whereas \cite{liuRangeSidelobeReduction2020} relies on \gls{mui} minimization.
Minimizing \gls{mui} is not always optimal in terms of sum rate as it does not allow the scaling or rotation of the constellation on the receiver side, limiting the communication performance.
The radar-only waveform shows significantly lower range sidelobes than the \gls{isac} counterparts, indicating the trade-off between sensing and communications.

Fig. \ref{fig:ISL_vs_rate_K2} and Fig. \ref{fig:ISL_vs_rate_K4} compare the trade-offs between ISL and sum rate for the proposed \gls{sdisac} waveform and baseline for $K=2$ and $K=4$.
In both cases, the proposed method outperforms the baseline except for the low sum rate region, which aligns with the result in Fig. \ref{fig:sidelobe}.
As discussed earlier, this is mainly due to the suboptimal nature of the \gls{mui} minimization approach compared to linear precoding. 
Notably, the slope of the proposed method's curve is not as steep as that of the baseline in the low sum rate region, resulting in the cross-over of the two curves.
This is because the rank of the added signal is ${N_t}-K=18$, which is lower than the rank of the baseline waveform that has rank ${N_t}=20$.
In other words, the impact of the lower rank signal is higher in the low sum rate region.
This can be seen as a trade-off for additional benefits of the \gls{sdisac} waveform.
However, it is important to note that the cross-over point is lower when $K=2$ than when $K=4$.
This implies that the impact of the lower-rank added signal diminishes as the user-number-to-antenna-number ratio decreases, i.e., $K/{N_t}$ goes to zero.
% Hence, in the massive \gls{mimo} scenario where ${N_t} \gg K$, the proposed scheme can outperform the existing method
% Nonetheless, in many general scenarios where moderate communication rates are required, the proposed waveform provides superior \gls{isl} suppression performance. 

Fig. \ref{fig:ISL} plots the ISL of the proposed \gls{sdisac} waveform with increasing power split ratio values.
In line with the previous results, the radar-only scheme presents a lower bound of \gls{isl} performance.
The gap between the proposed and radar-only schemes is quite substantial.
This motivates further investigation, which we designate as future work. 
Moreover, the \gls{isl} grows with the increasing power split ratio due to the power trade-off. 
The \gls{isl} is also impacted by the number of users, $K$.
This is because the \gls{dof} is lower with higher $K$, making range sidelobe suppression more challenging.  
At high power split ratio, all three cases converge closely.
This suggests that the power split ratio dominates range suppression performance under high power split ratio conditions.

\subsection{Performance under Imperfect \gls{csi}}

We evaluate the beampattern matching performance of the proposed method under imperfect \gls{csi}.
To handle the effective interference induced by the channel mismatch, we jointly minimize the radar objective and effective interference using the MM-LineSearch algorithm, as described in Section \ref{sec:imperfect}.
We adopt the exponential correlation matrix whose entry in the $i$th row and $j$th column is $R_{i,j}=\rho^{|i-j|}$ with $\rho\in \mathbb{C}$ and $|\rho|\in [0,1]$ \cite{choiDownlinkTrainingTechniques2014}.
In this paper, we set $\rho=0.6$ \cite{choiDownlinkTrainingTechniques2014}.

Fig. \ref{fig:SumRate_imperfect} shows the achievable sum rate with increasing weights $\omega$.
We consider power split ratio values $\beta=0.2,0.5$ and CSI accuracy parameters $\mu=0.1,0.3$ in \eqref{eq:imperfect_csi}.
For all cases, the achievable sum rate decreases as the weight $\omega$ increases.
This is because a higher $\omega$ corresponds to less emphasis on interference suppression, leading to increased effective interference.
The achievable sum rate is significantly lower when $\mu=0.3$ than when $\mu=0.1$ due to larger effective interference.
Moreover, we can see interference suppression has a greater impact on the sum rate when $\mu=0.3$ than when $\mu=0.1$.
Moreover, for the same $\mu$, interference suppression was more critical at lower power split ratio.
This is because the added signal becomes relatively larger than the communication signal at low power split ratio, resulting in stronger effective interference.

Fig. \ref{fig:PSLR_imperfect} shows the peak sidelobe ratio with increasing values of weight $\omega$.
For all cases, the peak sidelobe ratio decreases as the weight $\omega$ increases, which shows the trade-off between beampattern matching and interference suppression.
It can be seen the peak sidelobe ratio is higher when the CSI accuracy was lower.
This implies that suppressing interference becomes challenging with higher $\mu$, resulting in a less favorable trade-off between radar sensing and interference suppression.

\begin{figure}[!t]
\center{\includegraphics[width=.75\linewidth]{./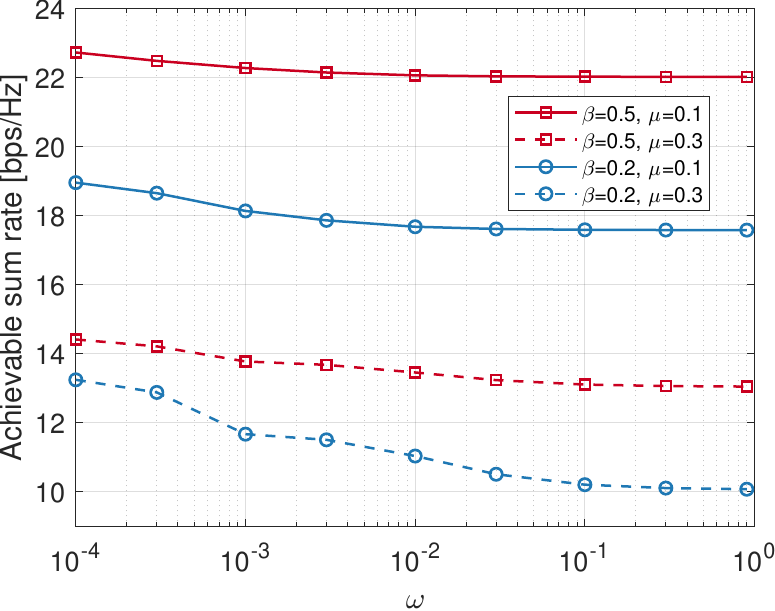}}
\caption{ \small Achievable sum rate versus weight $\omega$ for
$\beta=0.2,0.5$ and $\mu=0.1,0.3$.
\label{fig:SumRate_imperfect}
}

\end{figure}
% \vspace{-1.5mm}
\begin{figure}[!t]
\center{\includegraphics[width=.75\linewidth]{./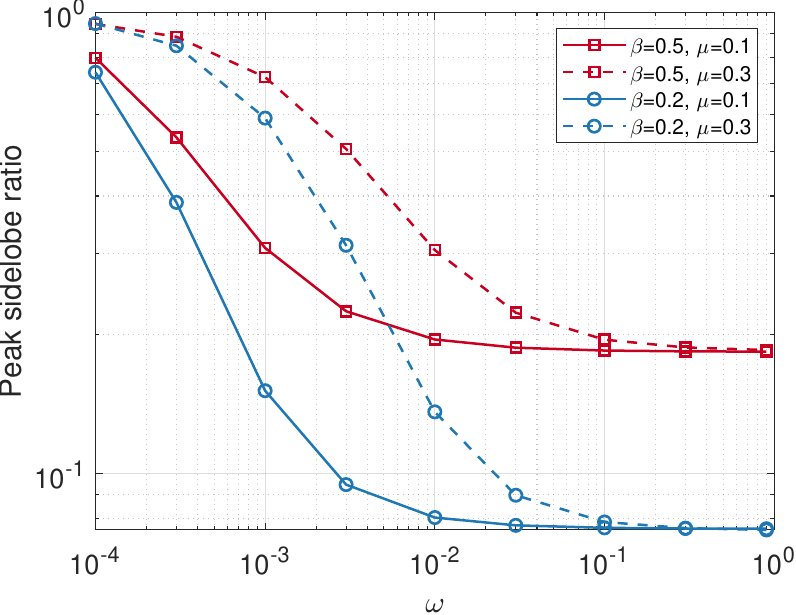}}
\caption{ \small Peak sidelobe ratio versus weight $\omega$ for
$\beta=0.2,0.5$ and $\mu=0.1,0.3$.
}
\label{fig:PSLR_imperfect}
\end{figure}
% \vspace{-1.5mm}

\section{Conclusion and Future Work}

This paper investigated a novel waveform structure, \textit{the \gls{sdisac} waveform}, which leverages the \gls{nsp} and \gls{iv} methods.
Specifically, we demonstrated a method of superimposing sensing signals onto communication signals without affecting the communication channel. 
The proposed \gls{sdisac} waveform decouples communication and radar waveform design tasks, offering benefits such as lower complexity.
% We demonstrated a superimposed signal design strategy that optimizes radar performance without affecting the communication channel.
% The proposed method decouples communication and radar waveform design tasks, simplifying the problem formulation and associated solution algorithm.  
% offering two timescale operations and lower complexity.
We presented applications of the proposed \gls{sdisac} waveform, such as beampattern and correlation optimization.
Moreover, we discussed the impact of imperfect \gls{csi} on the performance of the proposed waveform and strategies to address it.
We evaluated the performance of the \gls{sdisac} waveform through simulations.
% , which demonstrate that it achieves a performance comparable to that of joint waveform design with a lower complexity and better time flexibiliy.
Simulation results showed that the proposed method can achieve comparable performance to the joint waveform design of radar and communications, despite additional benefits such as complexity and time flexibility.
Future work could extend the proposed method to MIMO orthogonal frequency division multiplexing (OFDM) systems.  

\appendices

\section{Solution to Problem \eqref{prob:comm}}
By letting $\textbf{T}_k=\textbf{f}_k\textbf{f}_k^H$ and $\textbf{T}=\sum_{k=1}^{K}\textbf{T}_k$, the problem \eqref{prob:comm} can be transformed into a semidefinite program as 
% \begin{mini}|s|
% {\textbf{R}}{\text{Tr}(\textbf{R})}
% {}{}
% \addConstraint{\textbf{R}=\sum_{k=1}^{K}\textbf{R}_k}
% \addConstraint{\frac{\textbf{h}^H_k\textbf{R}_k\textbf{h}_k}{\sum_{k'=1,k'\neq k}^K\textbf{h}^H_k\textbf{R}_{k'}\textbf{h}_k+\sigma^2}\geq \gamma_k }
% \end{mini}
\begin{equation}
    \begin{aligned}
        & \underset{\textbf{T},\{\textbf{T}_k\}_{k=1}^K}{\text{min}}
        & & \text{Tr}(\textbf{T}) \\
        & \text{s.t.}
        & & \textbf{T} = \sum_{k=1}^{K} \textbf{T}_k, \ \text{Rank}(\textbf{T}_k)=1, \ \forall k \\
        & & & \frac{\textbf{h}^H_k \textbf{T}_k \textbf{h}_k}{\sum_{k'=1,k' \neq k}^K \textbf{h}^H_k \textbf{T}_{k'} \textbf{h}_k + L\sigma_c^2} \geq \gamma_k, \ \forall k.
    \end{aligned}
    \label{eq:optimization_problem}
\end{equation}
Using the \gls{sdr} technique, the above problem can be reformulated as 
% \begin{mini}|s|
% {\textbf{R}}{\text{Tr}(\textbf{R})}
% {}
% {}
% \addConstraint{\textbf{R}-\sum_{k=1}^{K}\textbf{R}_k\succeq 0}{}
% \addConstraint{\frac{\textbf{h}^H_k\textbf{R}_k\textbf{h}_k}{\sum_{k'=1,k'\neq k}^K\textbf{h}^H_k\textbf{R}_{k'}\textbf{h}_k+\sigma^2}}{\geq \gamma_k}
% % \label{prob:comm}
% \end{mini}
\begin{equation}
    \begin{aligned}
        & \underset{{\textbf{T},\{\textbf{T}_k\}_{k=1}^K}}{\text{min}}
        & & \text{Tr}(\textbf{T}) \\
        & \text{s.t.}
        & & \textbf{T} - \sum_{k=1}^{K} \textbf{T}_k \succeq 0 \\
        & & & \frac{\textbf{h}^H_k \textbf{T}_k \textbf{h}_k}{\sum_{k'=1,k' \neq k}^K \textbf{h}^H_k \textbf{T}_{k'} \textbf{h}_k + L\sigma_c^2} \geq \gamma_k, \ \forall k.
    \end{aligned}
    \label{eq:optimization_problem_r}
\end{equation}
The relaxed problem is convex, which can be solved with numerical tools.
The beamforming vectors $\{\textbf{f}_k\}_{k=1}^K$ can be recovered from the obtained optimal covariances $\textbf{T}$ and $\{\textbf{T}_k\}_{k=1}^K$, as detailed in \cite{liuJointTransmitBeamforming2020a}.
The power minimization algorithm is summarized in Algorithm \ref{alg:comm}.
\begin{algorithm}
\caption{SDR algorithm for solving \eqref{prob:comm}}\label{alg:comm}
\begin{algorithmic}
    \State \textbf{Input:} $P_a$, $\textbf{X}_c$, $\bm{\Delta}$, $\{d(\theta_u)\}_{u=1}^U$
    \State \textbf{Steps:}
    \State Obtain $\textbf{T}$ and $\{\textbf{T}_k\}_{k=1}^K$ by solving problem \eqref{eq:optimization_problem_r}
    \State Recover $\textbf{B}$ from $\textbf{T}$ and $\{\textbf{T}_k\}_{k=1}^K$
    \State $\textbf{X} \gets \textbf{X}_c + \bm{\Delta}\textbf{B}$
    \State \textbf{Output:} Transmit waveform \textbf{X}
\end{algorithmic}
\end{algorithm}

% \section{Proof of Lemma \ref{lemma:1}}\label{appendix:A} 

\bibliographystyle{IEEEtran}
\bibliography{IEEEabrv,references}

\end{document}